\documentclass[copyright,creativecommons]{eptcs}

\usepackage{amsmath}
\usepackage{amssymb}
\usepackage{amsthm}
\usepackage{amsfonts}
\usepackage{amstext}



\newtheorem{definition}{Definition}
\newtheorem{theorem}{Theorem}
\newtheorem{proposition}{Proposition}
\newtheorem{example}{Example}
\newtheorem{lemma}{Lemma}

\newcommand{\msg}[1]{\op{msg}(#1)}
\newcommand{\pos}[1]{\op{pos}(#1)}
\newcommand{\Neg}[1]{\op{neg}(#1)}


%

\usepackage{diagrams}
\diagramstyle[h=2mm,w=3mm,PostScript=dvips
]


%

\newcommand{\skel}{\ensuremath{\mathbb{A}}}





\newarrow{StrNext}      ===={=>}
\newarrow{DotStrNext}   ....{=>}
\newarrow{DotTo}   ....{>}
\newarrow{Bond} ----{->}
\newarrow{DashTo}{}{dash}{}{dash}>
\newarrow{MapTo}      |---{->}

\newarrow{StrNext}      ===={=>}
\newarrow{Bond} ----{->}
\newarrow{DashTo}{}{dash}{}{dash}>


%

{\bfseries}{\itshape}
{\bfseries}{\itshape}
{\bfseries}{\upshape}
{\bfseries}{\itshape}

\newtheorem{assumption}{Assumption}


\newcommand{\nodes}[1]{\ensuremath{\mathsf{nodes}(#1)}}




\newcommand{\TOP}[1]{{\sf top}(#1)}

\newcommand{\lts}[2]{\ \stackrel{#1}\longrightarrow_{#2}\ }
\newcommand{\ltsstar}[1]{\ \longrightarrow^*_{#1}\ }
\newcommand{\msgbox}[2]{[#1]_{#2}}
\newcommand{\interact}[4]{#1\rightarrow#2:\mathsf{#3}\langle#4\rangle}

\newcommand{\pp}{\mathrel{\boldsymbol{\mathord{\mid}}}}

\newcommand{\pfx}{\mathbf.\ }
\newcommand{\INACT}{\mathbf0 }



\newcommand{\who}{\mathsf{who}}


\newcommand{\Rule}[2]{\displaystyle{\frac{#1}{#2}}}
\newcommand{\Did}[1]{(\textsc{#1})}
\newcommand{\NI}{\noindent}

\newcommand{\op}[1]{{\sf #1}}


\newcommand{\absem}[1]{\{\!\!\!\{#1\}\!\!\!\}^{}}

\title{Choreographies with Secure Boxes and Compromised Principals}
\author{Marco Carbone\thanks{The first author is partially funded by
    the CosmoBiz project.  The second author is partially funded by
    the National Science Foundation, (grant no.~CNS-0952287).}
  \institute{IT University of Copenhagen\\
    Denmark} \email{carbonem@itu.dk} \and Joshua Guttman
  \institute{Worcester Polytechnic Institute\\
    United States} \email{guttman@wpi.edu} }
 
\begin{document}
\maketitle

\begin{abstract}
We equip choreography-level session descriptions with a simple
abstraction of a security infrastructure.  Message components may be
enclosed within (possibly nested) "boxes" annotated with the intended
source and destination of those components.  The boxes are to be
implemented with cryptography.

Strand spaces provide a semantics for these choreographies, in which
some roles may be played by compromised principals.  A \emph{skeleton}
is a partially ordered structure containing local behaviors (strands)
executed by \emph{regular} (non-compromised) principals.  A skeleton
is \emph{realized} if it contains enough regular strands so that it
could actually occur, in combination with any possible activity of
compromised principals.  It is \emph{delivery guaranteed (DG)
  realized} if, in addition, every message transmitted to a regular
participant is also delivered.

We define a novel transition system on skeletons, in which the steps
add regular strands.  These steps solve tests, i.e. parts of the
skeleton that could not occur without additional regular behavior.

We prove three main results about the transition system.  First, each
minimal DG realized skeleton is reachable, using the transition
system, from any skeleton it embeds.  Second, if no step is possible
from a skeleton $\skel$, then $\skel$ is DG realized.  Finally, if a
DG realized $\skel'$ is accessible from $\skel$, then $\skel'$ is
minimal.  Thus, the transition system provides a systematic way to
construct the possible behaviors of the choreography, in the presence
of compromised principals.
\end{abstract}

\section{Introduction}
\label{sec:introduction}
Distributed transactions are increasingly central to our economic and
social infrastructure.  Rigorous, type-based notions of session are
thus subjects of intense exploration, as they can ensure that
communications among principals are properly
coordinated~\cite{THK,honda.vasconcelos.kubo:language-primitives,HYC08,BCDDDY:concur2008,MostrousYoshidaHonda09}.
However, sessions require a security infrastructure, since the data
they carry may be sensitive, and a transaction may (for instance)
transfer money from one person to another.  Standard security
infrastructures, such as TLS~\cite{DierksEtAl99} for web interactions,
are two-party, point-to-point mechanisms.  When a transaction involves
more than two parties---for instance, a buyer, a seller, and a
bank---then it is hard to see how to use TLS sessions to ensure that
the parties get any security guarantees.

An alternative---given a session choreography---is to synthesize a
security infrastructure that is appropriate to the goals of that
session~\cite{CDFBL08,CDFBL09}.  This infrastructure is effectively a
custom cryptographic protocol generated specifically to ensure that
malicious principals cannot undermine the behavior that the advertised
session choreography promises to compliant principals.  Generating
this protocol, and ensuring its correctness, requires reasoning at
several levels, including both the choreography level and the
cryptographic level.

In this paper we study reasoning specifically at the choreography
level, without introducing the complexities of realistic cryptography.
These complexities include selection of public-key and symmetric
cryptographic primitives, as well as key distribution.  Another recent
paper which treats protocols by an abstraction of their cryptographic
mechanisms is~\cite{MaffeiEtAl07}.

We use a simple choreography-level specification for security of parts
of messages, which we call \emph{boxes}.  A box $[M]_{\rho_1\rho_2}$
represents the fact that message $M$ will be sent in some format $x$
such that, if $\rho_1$ and $\rho_2$ are uncompromised roles, then $x$
was prepared only by $\rho_1$ and can be opened only by $\rho_2$.
Boxes may appear nested inside other boxes.  Naturally, any
implementation of boxes will require cryptography.  We might implement
boxes by message structures in which $\rho_1,\rho_2$ first agree on a
shared secret, and then use it to encrypt and provide message
authentication for $M$ (and other messages as determined by the
choreography).  The first step of agreeing on a shared secret may rely
on public-key cryptography.  Boxes are a mechanism to specify when a
message component achieves secrecy and integrity between two
uncompromised principals, despite other compromised principals
behaving unpredictably or maliciously.

In this paper, we will develop a method to define the possible
behaviors of a choreography as a function of a choice of compromised
roles $R$.  That is, given an assumption that principals not in $R$
will behave in accordance with their roles in the choreography, we
would like to define all possible behaviors a choreography execution
can exhibit.  To do so, we translate each choreography description
into a set of strands.  Each of these strands represents a possible
local behavior of one principal in a single session, running a role of
the choreography.  These \emph{regular}, non-compromised strands may
interact with each other and with any behavior within the power of the
adversary, to produce a variety of global executions.  We give a
method for generating all of these global executions, or more
precisely, for finding the minimal, essentially different executions.  

We call these minimal, essentially different executions \emph{shapes}.
Each shape is a shape \emph{relative} to some starting point,
typically some assumed local strand representing a behavior of a
single participant.  The shapes describe the possible
\emph{explanations} for the experience of this participant, i.e.~what
other local executions (strands) of regular participants would be
needed in possible runs, in combination with adversary actions.  They
are minimal in that no lesser amount of regular behavior would yield a
full explanation of regular activity in the starting point.

We generate shapes via a transition system defined by two rules.  One
rule says that additional strands must be added when a participant
receives a box that the adversary could not create, and which is not
yet explained by an earlier transmission from an uncompromised
strand.  It also applies to situations where a box has been removed
from nested boxes, and only regular strands can extract it.  

The other rule corresponds to the usual choreography assumption on the
communication medium.  This assumption is that the medium is
\emph{resilient}, i.e.~that when an uncompromised participant sends a
message to another uncompromised participant, then that message will
be delivered.  Since we work in a partially ordered execution model,
there is no assumption about \emph{when} this message will be
delivered, relative to causally unrelated actions.  
We present three main results.  
\begin{enumerate}
  \item In the transition system defined by our two rules, and
  relative to a chosen assumption $R$ about compromised roles, if
  $\skel'$ is any shape compatible with a starting point $\skel$, then
  $\skel\longrightarrow^* \skel'$.  The same holds for shapes with
  guaranteed delivery.  (Thm.~\ref{thm:completeness}.)
  \item When we start from a single strand $\skel$, then any maximal
  trace $\skel\ltsstar S\skel'\not\rightarrow$ terminates with a shape
  $\skel'$ with delivery guaranteed.  (Thm.~\ref{thm:soundness}.)
  \item Every trace starting from a single strand terminates.  (Thm.~\ref{thm:termination}.)
\end{enumerate}
In particular, the first point holds for all strand spaces based on
boxes, while the second and third are specific to strand spaces
defined as the semantics of choreographies in a particular syntax.


\section{Abstract Strand Spaces}
\label{sec:abstractstrandspaces}
\subsection{Basic Definitions}
\begin{definition}[Messages and Boxes]
  Messages $M$ and boxes $b$ are defined:
  \begin{align*}
    W::= & \phantom{{}\mid\ {}} v\mid b
    &
    M::= & \phantom{{}\mid\ {}}\tilde W 
    &
    b::= & 
    \phantom{{}\mid\ {}}
    \msgbox{\tilde M}{\rho_1\rho_2}
  \end{align*}
  where $\tilde \cdot$ denotes a tuple of zero or more elements. $v$
  is a basic value---belonging to a \emph{finite} set of basic
  values---and $\rho_i$ ranges over the set of roles $\mathcal R$.
\end{definition}
\NI A message $M$ can either be a value $v$ or a box $\msgbox{\tilde
  M}{\rho_1\rho_2}$.  We also use letter $c$ to denote boxes.
A box is a tuple of messages $M_i$ from $\rho_1$ that can only be
opened by $\rho_2$. 

%
A {\em strand space}, first introduced in \cite{strandspaces} as a
formalism for reasoning about cryptographic protocols, is a collection
of strands. Here, we introduce {\em abstract strand spaces}, strand
spaces where messages range over $M$ (unlike the original version with
cryptography).  A \emph{substitution} is a function that maps basic
values to basic values.  Since basic values form a finite set, there
are only finitely many substitutions.
\begin{definition}[Abstract Strand Space]
  A {\em directed term} is a pair denoted by $\pm M$ where
  $\pm\in\{-,+\}$ is a direction with $+$ representing transmission
  and $-$ reception.  A \emph{trace} is an element of $(\pm M)^*$, the
  set of finite sequences of directed terms.    

  \NI An {\em abstract strand space} is a set $S$ with a trace mapping
  $\mathsf{tr}:S\rightarrow(\pm M)^*$.  A \emph{strand} is an element
  of $S$.

  \NI A strand space $S$ is \emph{closed under} a set of substitutions
  $\Sigma$, if, for every $s\in S$ and $\sigma\in\Sigma$, there is an
  $s'\in S$ such that $\mathsf{tr}(s')=\sigma(\mathsf{tr}(s'))$.

\end{definition}
In this paper we consider \emph{finite} strand spaces that are closed
under substitutions of basic values for basic values.

\NI {\bf Notation.} If $s\in S$ is a strand then $s(i)$ denotes the
$i^{\text{th}}$ element of the trace of $s$ and is called {\em
  node}. We write $m\Rightarrow n$ when $n$ is the node immediately
after $m$ on the same strand $s$ i.e. $m=s(i)$ and $n=s(i+1)$. Also,
$\msg n$ denotes the message of the directed term in $n$ while $\Neg
n$ ($\pos n$) holds if $n$ is a reception (transmission) node.

\NI It is now interesting to see how these input/output traces could
be combined together in order to form a real execution. {\em
  Skeletons} express parts of an execution (with some pending
transmission/reception nodes related to adversary activity):
\begin{definition}[Skeleton]
  Given 
  a strand space $S$, 
  a \emph{skeleton} $\mathbb A$ is a finite set of regular nodes
  (nodes belonging to strands of $S$), denoted by
  $\mathsf{nodes}(\mathbb{A})$, equipped with a partial order
  $\preceq_{\mathbb A}$ on $\mathsf{nodes}(\mathbb{A})$ indicating
  causal precedence (consistent with $\Rightarrow$). Moreover, if
  $m\Rightarrow n$ and $n\in\nodes \skel$, then $m\in\nodes \skel$.
\end{definition}

\smallskip

\NI In the rest of the paper, $\prec$ will denote the non-reflexive
subrelation of $\preceq$.

\smallskip

\begin{example}\rm
  As an example, let us consider a skeleton composed by three
  strands. Below, outgoing and incoming edges denote transmission and
  reception nodes respectively.  {\small
    \begin{center}
      \begin{equation}\label{skeleton1}
        \begin{diagram}
          n_1      &\rTo^{\ \msgbox{M}{\rho_1\rho_3}\ } & \preceq & \rTo^{\ \msgbox{M}{\rho_1\rho_3}\ } & n_2\\\\
          &&&&\dImplies \\
          &&&& n_3  & 
          \rTo^{\ \msgbox{M',\msgbox{M}{\rho_1\rho_3}}{\rho_2\rho_3}\ } & \preceq & \rTo^{\ \msgbox{M',\msgbox{M}{\rho_1\rho_3}}{\rho_2\rho_3}\ }& n_4\\\\\\
          \dImplies &&&&&&&& \dImplies\\\\\\
          n_6 &\lTo^{\ {\msgbox{M'}{\rho_3\rho_1}}\ }
          &\ &\ &\preceq&\ &\ &
          \lTo^{\ {\msgbox{M'}{\rho_3\rho_1}}\ }
          & n_5
        \end{diagram}    
  \end{equation}
\end{center}} \NI The three strands above belong to roles $\rho_1$,
$\rho_2$ and $\rho_3$ respectively. If the middle strand was not there
e.g. if $\rho_2$ were compromised, then we would have the following
skeleton: {\small
  \begin{center}
    \begin{diagram}
      n_1  &\rTo^{\ \msgbox{M}{\rho_1\rho_3}\ }   &\ \preceq\ & \rTo^{\ \msgbox{M',\msgbox{M}{\rho_1\rho_3}}{\rho_2\rho_3}\ }& n_4\\\\\\
      \dImplies &                        &           &             & \dImplies\\\\\\
      n_6  &\lTo^{\ {\msgbox{M'}{\rho_3\rho_1}}\ }&\ \preceq\ &    \lTo^{\ {\msgbox{M'}{\rho_3\rho_1}}\ }       & n_5
    \end{diagram}    
  \end{center}} 
\end{example}


As previously said, some roles may belong to compromised
principals. In the sequel, we set $R\subseteq\mathcal R$ to be the set
of compromised roles. Moreover, we assume that each strand is always
marked with the role it belongs to (a strand belongs to exactly one
role). On this premises, it is natural to define the untamed behaviour
of $R$ (or adversary) in terms of {\em penetrator} strands:
\begin{definition}[Abstract Penetrator]
  $\mathcal{P}_R$, the \emph{abstract penetrator} for a set of
  compromised roles $R$, is the set of strands of the forms:
  \begin{center}
    \begin{tabular}{lllllllllllllll}
      (\textbf{C})   &  $-M_0\Rightarrow -(M_1,\ldots,M_k)\Rightarrow +(M_0,M_1,\ldots,M_k)$
      &\qquad\qquad
      (\textbf{A})   &  $+\tilde v$       
      \\[1mm]
      (\textbf{S})   &  $-(M_0,M_1,\ldots,M_k)\Rightarrow +M_0\Rightarrow+(M_1,\ldots,M_k)$
      &\qquad\qquad
      \\[1mm]
      (\textbf{B})   &  $-\tilde{M}\Rightarrow+[\tilde{M}]_{\rho_1\rho_2}$\quad where $\rho_1\in R$
      &\qquad\qquad
      \\[1mm]
      (\textbf{O})   &  $-[\tilde{M}]_{\rho_1\rho_2}\Rightarrow+\tilde{M}$\quad where $\rho_2\in R$
    \end{tabular}
  \end{center}
\end{definition}
Above, (\textbf{C}) allows the penetrator to {\em compose} received
messages and resend them; in (\textbf{S}), a compound message can be
{\em separated} and resent; (\textbf{B}) allows to {\em box} messages
and sign them with a compromised role (from $R$); with (\textbf{O}),
the penetrator can {\em open} boxes targeted to compromised roles; and
using (\textbf{A}), the penetrator can send any clear text. 

\smallskip 

We can compose (instances of) the various strands above with a
skeleton in order to build the {\em graph of interaction} of a
skeleton $\skel$ with respect to a penetrator $\mathcal{P}_R$ i.e. an
acyclic directed graph $\mathcal{B}$ whose nodes are the nodes of
strands in $\mathcal{P}_R$ and $\skel$, and whose edges can be
obtained by connecting any transmitting node $m$ with a receiving node
$n$ such that $\msg m=\msg n$.  We say of two nodes $m_0,m_k$ of
$\mathcal{B}$ that $m_0\preceq_{\mathcal{B}} m_k$ if there is a
sequence $m_0,m_1,\ldots,m_k$ such that for each pair $m_i,m_{i+1}$,
either $m_i\preceq_{\skel}m_{i+1}$, or $m_i\Rightarrow^+ m_{i+1}$ on a
penetrator strand of $\mathcal{P}_R$, or $m_i$ is a transmission node
and $ m_{i+1}$ is a receiving node connected to it.

\smallskip

\NI We now define a \emph{realized skeleton} i.e. a skeleton that has
precisely the behavior of some execution:
\begin{definition}[Realized Skeleton]
  A skeleton $\skel$ is \emph{realized} if there is a graph of
  interaction $\mathcal{B}$ of $\skel$ wrt $\mathcal{P}_R$ such that
  every reception node has an incoming edge, and for all nodes
  $m,n\in\nodes\skel$, $m\preceq_{\mathcal{B}} n$ implies
  $m\preceq_{\skel} n$.
\end{definition}

\smallskip

A {\em shape} is a minimal homomorphism preserving $\preceq$ that maps
a skeleton into a realized one. Below, a homomorphism
$H:\skel_0\mapsto\skel_1$ is node-wise injective if it is an injective
function on the nodes of $\skel_0$. Moreover, $H_0$ is node-wise less
than or equal to $H_1$, written $H_0\leq H_1$, if for some node-wise
injective $L$, $L\circ H_0=H_1$. We then say that $H_0$ is node-wise
minimal in some set $Z$ whenever $H_0\in Z$ and for any $H\in Z$,
$H\leq H_0$ implies $H$ and $H_0$ are isomorphic.
\begin{definition}[Shape \cite{DGT07}]
  $H:\skel_0\mapsto\skel'$ is a shape for $\skel_0$ if $H$ is
  node-wise minimal among the set of homomorphisms
  $H':\skel_0\mapsto\skel''$ where $\skel''$ is realized.
\end{definition}

\smallskip

\NI Sometimes, with an abuse of terminology, if $H:\skel\mapsto\skel'$
is a shape for $\skel$, we shall say that $\skel'$ is a shape for
$\skel$.  For instance, the skeleton in (\ref{skeleton1}) is a shape
for $n_1\Rightarrow n_6$. On the other hand, because of the extra node
$n_4$, the following realized skeleton is not a shape
for $n_1\Rightarrow n_6$:\\[-7mm]
{\small
\begin{center}
  \begin{equation}\label{skeleton2}
    \begin{diagram}
      n_1      &\rTo^{\ \msgbox{M}{\rho_1\rho_3}\ } & \preceq &
      \rTo^{\ \msgbox{M}{\rho_1\rho_3}\ } & n_2
      & \qquad & \lTo^{\ \msgbox{v}{\rho_2\rho_3}\ }& n_4\\\\\\
               \dImplies&&&&\dImplies \\
       n_6 &\lTo^{\ \msgbox{\msgbox{M}{\rho_3\rho_1}}{\rho_3\rho_1}\ }
       &\preceq & \lTo^{\ \msgbox{\msgbox{M}{\rho_3\rho_1}}{\rho_3\rho_1}\ } &n_3&\ &
    \end{diagram}    
  \end{equation}
  \end{center}}
  We also consider special skeletons which guarantee that messages are
  delivered.
\begin{definition}[Delivery-Guaranteed Skeletons]
  A {\em delivery-guaranteed skeleton} (DG skeleton) is a skeleton
  such that for every positive node $n$ such $\msg n=\msgbox{\tilde
    M}{\rho_1\rho_2}$ and $\rho_2\not\in R$ there exists a negative
  node $n'$ on another strand such that $\msg n=\msg n'$.
\end{definition}
\NI Note that (\ref{skeleton2}) is not DG while (\ref{skeleton1}) is.
Delivery-guaranteed skeletons characterize some special shapes: 
\begin{definition}[Delivery-Guaranteed
  Shape]\label{definition:dgshape}
  $H:\skel_0\mapsto\skel'$ is a DG shape for $\skel_0$ if $H$ is
  node-wise minimal among the set of homomorphisms
  $H':\skel_0\mapsto\skel''$ where $\skel''$ is a realized and DG
  skeleton.
  
\end{definition}


\subsection{Characterizing Realized Skeletons}
In this subsection, we will introduce a characterization of realized
skeletons in the spirit of \cite{G09}. The idea is to use {\em
  authentication tests} \cite{DGT07} as a method for explaining why a
message is suddenly found outside a box which was previously
containing it. In general, either the box owner is compromised or else
it was transmitted by a regular strand. The following definition
formalizes the idea of a message occurring inside or outside a set of
boxes.
\begin{definition}
  A message $M_0$ is {\em found only within a set of boxes $B$ in
    $M_1$}, written $M_0\odot^BM_1$, whenever every occurrence of
  $M_0$ in $M_1$
  is nested inside a box of $B$.\\
  A message $M_0$ is {\em found outside $B$ in $M_1$}, written
  $M_0\dagger^BM_1$, whenever not $M_0\odot^BM_1$.
\end{definition}
\NI As an example, for $M\not="Hi"$, $M$ is found only within
$\{\msgbox{M}{\rho_1\rho_2},\,\msgbox{\msgbox{M,"Hi"}{\rho_3\rho_1},"Hi"}{\rho_3\rho_4}\}$
in $\msgbox{\msgbox{M}{\rho_1\rho_2}}{\rho_2\rho_3}$ and
$\msgbox{\msgbox{\msgbox{(M,"Hi")}{\rho_3\rho_1},"Hi"}{\rho_3\rho_4}}{\rho_4\rho_1}$. Also,
$M$ is found only within $\{\msgbox{"Hi"}{\rho_3\rho_1}\}$ in
$\msgbox{\msgbox{"Hi"}{\rho_1\rho2}}{\rho_3\rho_1}$ as it does not
occur at all. On the contrary, $M$ is found outside
$\{\msgbox{"Hi"}{\rho_1\rho_2}\}$ in
$\msgbox{M,\msgbox{"Hi"}{\rho_1\rho_2}}{\rho_3\rho_4}$.

Given a skeleton, a set of boxes and a message, we can highlight those
minimal nodes for which such a message is found only outside the
boxes. This is formalized by the notion of cut:
\begin{definition}[Cut]
  Let $M$ be a message, $B$ a set of boxes and $\mathbb A$ a
  skeleton. Then,
  \[\textsf{Cut}(M,B,\mathbb A)= \{n\in\textsf{nodes}(\mathbb
  A):\exists m\preceq_{\mathbb A}n\text{ and }
  M\dagger^B\textsf{msg}(m)\}\] $\textsf{Cut}(M,B,\mathbb A)$ is
  defined whenever there exists a node $n$ in $\mathbb A$ such that
  $M\dagger^B\textsf{msg}(n)$.
\end{definition}

\smallskip

\NI Note that $M$ occurs outside $B$ in all minimal nodes of
$\textsf{Cut}(M,B,\mathbb A)$. In the following skeleton $\skel$,
{\small
  \begin{center}   
    \begin{equation}\label{skeleton3}
    \begin{diagram}
      n_1      &\rTo^{\ \msgbox{\msgbox{M}{\rho_1\rho_3}}{\rho_1\rho_2}\ } & \ \preceq\ & \rTo^{\ \msgbox{\msgbox{M}{\rho_1\rho_3}}{\rho_1\rho_2}\ } & n_2\\\\
      &                &       &                & \dImplies \\
      &&&& n_3   & \rTo^{\
        \msgbox{M',\msgbox{M}{\rho_1\rho_3}}{\rho_2\rho_3}\ } & \ \preceq\ & \rTo^{\ \msgbox{M',\msgbox{M}{\rho_1\rho_3}}{\rho_2\rho_3}\ }& n_4\\\\\\
      \dImplies&                &       &                &  &                &       &                & \dImplies\\
      n_6 & \lTo^{\ \msgbox{\msgbox{M}{\rho_1\rho_3}}{\rho_2\rho_1}\ } & \quad &&&& \quad & \lTo^{\ \msgbox{\msgbox{M}{\rho_1\rho_3}}{\rho_3\rho_2}\ }& n_5
    \end{diagram}    
    \end{equation}
  \end{center}}
\NI $\textsf{Cut}(\msgbox{M}{\rho_1\rho_3},B,\mathbb A)$ is the set
$\{n_3,n_4,n_5,n_6\}$ with minimal nodes $\{n_3,n_6\}$ for
$B=\{\msgbox{\msgbox{M}{\rho_1\rho_3}}{\rho_1\rho_2}\}$ and the whole
skeleton $\skel$ for $B=\emptyset$ (assuming that $\{n_1\}\in\rho_1$,
$\{n_2,n_3,n_6\}\in\rho_2$ and $\{n_4,n_5\}\in\rho_3$ and no role is
in $R$). Also, $\textsf{Cut}(\msgbox{M}{\rho_1\rho_3},B,\mathbb
A)=\{n_6\}$ for
$B=\{\msgbox{\msgbox{M}{\rho_1\rho_3}}{\rho_3\rho_2}\}$ but empty if
$\rho_2$ were compromised.  In the subskeleton $\skel'$ composed by
nodes $n_1$ and $n_2$ we have
$\textsf{Cut}(\msgbox{M}{\rho_1\rho_3},\{\msgbox{\msgbox{M}{\rho_1\rho_3}}{\rho_1\rho_2}\},\skel')=\emptyset$.

  The idea behind authentication tests is that any minimal node in a
  cut needs to be explained in the skeleton. In other words, there
  must be an earlier sequence of events that extracted the message out
  of some box or {\em legally} created it. Formally,
\begin{definition}[Solved Cut]\label{definition:solvedcut}
  A cut \textsf{Cut}$(M,B,\mathbb A)$ is solved wrt a set of
  compromised roles $R$, if for any of its $\prec_{\mathbb A}$-minimal
  nodes $m_1$:
  \begin{enumerate}
  \item either $m_1$ is a transmission node;
  \item or $M=\msgbox{\tilde M}{\rho_1\rho_2}$ and $\rho_1\in R$, or
    for some $\msgbox{\tilde M}{\rho_1\rho_2}\in B$, $\rho_2\in R$.
  \end{enumerate}
\end{definition}
\NI The definition above says that a cut \textsf{Cut}$(M,B,\mathbb A)$
is solved whenever, for every minimal reception node $n$, $M$ is
outside $B$ in $n$ because of some penetrator activity. For instance,
in (\ref{skeleton3}), $\textsf{Cut}(\msgbox{M}{\rho_1\rho_3},B,\mathbb
A)=\{n_6\}$ is not solved for
$B=\{\msgbox{\msgbox{M}{\rho_1\rho_3}}{\rho_3\rho_2},
\msgbox{\msgbox{M}{\rho_1\rho_3}}{\rho_1\rho_2},
\msgbox{M,\msgbox{M}{\rho_1\rho_3}}{\rho_2\rho_3}\}$ and $R=\emptyset$
while it is solved if $R=\{\rho_2\}$.  The above definition turns to
be a crucial property of realized skeletons. In fact, the following
proposition states that the property of being realized is
characterized by all if its cuts being solved.

\smallskip

\begin{proposition}\label{theorem:cutrealized}
  Let $\mathbb A$ be a skeleton. Then, every cut in $\mathbb A$ is
  solved if and only if  $\mathbb A$ is realized.
\end{proposition}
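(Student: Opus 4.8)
The plan is to prove the two directions of the biconditional separately, in each case reasoning through a graph of interaction $\mathcal B$ and following a single box as it passes in and out of the set $B$. Throughout I take the tracked message $M$ to be a box $\msgbox{\tilde M}{\rho_1\rho_2}$; when $M$ is a basic value the penetrator can always supply it with rule~(\textbf{A}), so that case is degenerate and I set it aside.

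For the direction realized $\Rightarrow$ every cut solved, let $\mathcal B$ witness realization, fix a cut $\textsf{Cut}(M,B,\skel)$ and a $\prec_\skel$-minimal node $m_1$; minimality gives $M\dagger^B\msg{m_1}$. Suppose toward a contradiction that $m_1$ is a reception node and clause~(2) fails, i.e.\ the creator of $M$ is not in $R$ and no box of $B$ has its opener in $R$. Realization gives $m_1$ an incoming edge, and I trace the exposed occurrence of $M$ backward through $\mathcal B$ by a case analysis on the penetrator strand producing each step. Rules~(\textbf{C}) and~(\textbf{S}) only tuple or project, so $M$ outside $B$ in the output forces $M$ outside $B$ in some input; for~(\textbf{B}) producing a box, either that box is $M$ itself, which puts $M$'s creator in $R$ (against the assumption), or the occurrence lies properly inside and the outer box is not in $B$, so $M$ is already outside $B$ in the input; dually, for~(\textbf{O}) opening a box with opener in $R$, either that box lies in $B$ (again against the assumption) or $M$ was already outside $B$ in the input. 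Since $\mathcal B$ is finite and acyclic, this descent terminates, and under our assumptions it can terminate only at a regular transmission node $m_0$ with $M\dagger^B\msg{m_0}$. Then $m_0\in\textsf{Cut}(M,B,\skel)$ and $m_0\preceq_{\mathcal B}m_1$, so realization yields $m_0\preceq_\skel m_1$, contradicting the $\prec_\skel$-minimality of the reception node $m_1$. Hence clause~(1) or~(2) holds and the cut is solved.

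For the converse, every cut solved $\Rightarrow$ realized, I construct a witnessing $\mathcal B$ by induction along $\preceq_\skel$, showing that the message of each regular reception node $n$ is derivable by the penetrator from basic values, from boxes it may forge (creator in $R$, rule~(\textbf{B})) or open (opener in $R$, rule~(\textbf{O})), and from the messages of regular transmission nodes strictly below $n$. The derivation recurses on the term structure of $\msg n$: values come from~(\textbf{A}), tuples are reassembled with~(\textbf{C}), and a box with compromised creator is built with~(\textbf{B}) from recursively-derived contents. The one genuinely non-structural case is a box $c$ with uncompromised creator, which the penetrator cannot forge and so must obtain whole; here I appeal to the hypothesis on the cut $\textsf{Cut}(c,B,\skel)$, taking $B$ to be the boxes the penetrator can neither forge nor open. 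Since $c$ occurs outside $B$ at $n$, we have $n\in\textsf{Cut}(c,B,\skel)$, and because clause~(2) is unavailable (the creator of $c$ is outside $R$, and every box of $B$ has its opener outside $R$), a $\prec_\skel$-minimal transmission node $m_1\prec_\skel n$ must exhibit $c$ outside $B$, from which the penetrator recovers $c$. Splicing these derivations onto the regular reception nodes yields the graph $\mathcal B$.

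The main obstacle is making this converse construction rigorous at exactly two points. First is the extraction lemma: the cut only guarantees that $c$ lies outside the opaque set $B$ at $m_1$, whereas the penetrator can actually peel down to $c$ only past boxes it can open; reconciling these---by a secondary induction that re-invokes solved cuts for the forgeable-but-unopenable boxes that may still enclose $c$, bottoming out where $c$'s uncompromised creator originates it---is the delicate heart of the argument. Second, I must verify that the assembled $\mathcal B$ is a legal realization: every reception node has an incoming edge by construction, and $m\preceq_{\mathcal B}n\Rightarrow m\preceq_\skel n$ for regular $m,n$ together with acyclicity, which follows because each penetrator sub-derivation attached to $n$ draws its regular inputs only from transmissions strictly $\prec_\skel n$, so every $\mathcal B$-path between regular nodes composes out of $\Rightarrow$-steps and strictly earlier $\prec_\skel$-steps and can contain no cycle.
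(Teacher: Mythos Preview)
Your backward-trace argument for ``realized $\Rightarrow$ every cut solved'' is correct and considerably more detailed than the paper, which dispatches this direction in a single sentence by asserting that an unsolved cut blocks the penetrator from producing the minimal reception's message.

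For the converse, the paper argues the contrapositive (not realized implies some cut is unsolved) while you attempt a direct construction of the realizing graph $\mathcal B$. You correctly flag the extraction step as the delicate point, but your proposed resolution does not work. The problem is the eager choice to \emph{always} forge a box with compromised creator via rule~(\textbf{B}). Consider a two-node skeleton with $s_1\colon +[c]_{\rho_3\rho_4}$ and $s_2\colon -[c]_{\rho_3\rho_4}$, ordered $s_1(1)\prec s_2(1)$, where $c=[\tilde M]_{r_1 r_2}$ has $r_1\notin R$, while $\rho_3\in R$ and $\rho_4\notin R$. All cuts are solved (every $\prec$-minimal node of every defined cut is the transmission $s_1(1)$), and the skeleton is trivially realized by forwarding $[c]_{\rho_3\rho_4}$ whole. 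Your recursion, however, forges the outer box and then must obtain $c$; the cut hands you $m_1=s_1(1)$, where $c$ sits inside the forgeable-but-unopenable $[c]_{\rho_3\rho_4}$, and the secondary induction cannot ``bottom out where $c$'s uncompromised creator originates it'' because no strand of role $r_1$ is present in the skeleton. The penetrator genuinely cannot obtain $c$ from this skeleton, so the extraction lemma as you sketch it is false.

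The repair is to reorganize the recursion: for any box $d$, first test whether $d$ is reachable from the prior transmissions $T_n$ by separating tuples and opening boxes whose opener lies in $R$, and take it whole if so; only when that fails and $d$ has creator in $R$ do you recurse into its contents. The stuck case is then precisely a box $c$ with creator outside $R$ that is not so reachable, and taking $B$ to be the outermost unopenable boxes shielding each prior occurrence of $c$ in $T_n$ exhibits an unsolved cut with the reception $n$ minimal---which is essentially the paper's contrapositive made precise.
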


\begin{proof}
$\Rightarrow$:\quad

We prove this by contradiction. Assume that $\mathbb A$ is not
realized. Then, by definition, there must be an input node $n$
containing a message that a penetrator $\mathcal P_R$ is not allowed
to send i.e. there is some node $n$ such that, for all $m\prec_\skel
n$, either (i) $\rho_1\not\in R$, $\msgbox{\tilde M}{\rho_1\rho_2}$ is
nested in $\msg n$ and does not occur in $\msg{m}$; or (ii) for some
message $M$ and $\rho_2\not\in R$, we have that
$M\dagger^{\{\msgbox{\tilde M}{\rho_1\rho_2}\}}\msg n$ and
$M\odot^{\{\msgbox{\tilde M}{\rho_1\rho_2}\}}\msg m$.
    If (i) holds, then $\textsf{Cut}(\msgbox{\tilde
      M}{\rho_1\rho_2},\emptyset,\skel)$ is clearly
    unsolved. Similarly, if (ii) then $\textsf{Cut}(M,\{\msgbox{\tilde
      M}{\rho_1\rho_2}\},\skel)$ is unsolved.

$\Leftarrow$:\quad
Assume that there is a cut
    \textsf{Cut}$(M,B,\mathbb A)$ which is not solved. That means,
    that there is a minimal input node where $M$ is only found inside
    $B$ and such that $M\not=\msgbox{\tilde M}{\rho_1\rho_2}$ for
    $\rho_1\in R$, and for no $\msgbox{\tilde M}{\rho_1\rho_2}\in B$,
    $\rho_2\in R$. But then, there is no penetrator activity which
    could derive $M$ hence $\skel$ would not be realized.
\end{proof}
We conclude this section observing that an unsolved cut implies the
existence of another unsolved cut whose boxes $B$ are messages
appearing in the current skeleton. In the sequel, let the relation $M
\sqsubset M'$ hold whenever $M$ is contained in $M'$ ($\sqsubseteq$ is
the reflexive closure).


\begin{proposition}\label{theorem:minimalB}
  Let $\mathbb A$ be a skeleton and
  $\textsf{Cut}(\msgbox{M}{\rho_1\rho_2},B,\mathbb A)$ an unsolved
  cut. Then, $\textsf{Cut}(\msgbox{M}{\rho_1\rho_2},B',\mathbb A)$ is
  also unsolved for $B'=\{b \pp n'\prec_\skel n\;\text{ s.t. }\
  \msgbox{\tilde M}{\rho_1\rho_2} \sqsubset b\sqsubseteq
  \textsf{msg}(n') \land \textsf{rcv}(b)\not\in R \ \}$ for some
  $\preceq_{\skel}$-minimal input node $n$ in
  $\textsf{Cut}(\msgbox{M}{\rho_1,\rho_2},B,\mathbb A)$ and
  $\rho_1\not\in R$.
\end{proposition}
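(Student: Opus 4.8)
The plan is to start from the given unsolved cut $\textsf{Cut}(\msgbox{M}{\rho_1\rho_2},B,\mathbb A)$ and a fixed $\preceq_{\skel}$-minimal input node $n$ witnessing that the cut is unsolved (such a node exists precisely because the cut is unsolved, and since $\rho_1 \notin R$ the only way the cut can fail to be solved at $n$ is through condition~2 of Definition~\ref{definition:solvedcut} failing, i.e.\ $n$ is a reception node at which $\msgbox{M}{\rho_1\rho_2}$ occurs only inside boxes of $B$ whose recipients are all uncompromised). First I would show that $B'$, as defined from this $n$, is a genuine set of boxes witnessed by actual messages of strictly-earlier nodes, and that $\msgbox{M}{\rho_1\rho_2}$ is still found only inside $B'$ at $n$, so that $n$ remains a $\preceq_{\skel}$-minimal node of the new cut $\textsf{Cut}(\msgbox{M}{\rho_1\rho_2},B',\mathbb A)$.

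The key chain of implications I would carry out is: (1) every box $b \in B'$ satisfies $\msgbox{M}{\rho_1\rho_2} \sqsubset b \sqsubseteq \textsf{msg}(n')$ for some $n' \prec_\skel n$ with $\textsf{rcv}(b) \notin R$, so $B'$ only collects boxes that actually appear earlier in $\skel$ and have uncompromised recipients; (2) because $n$ is unsolved in the original cut and $\rho_1 \notin R$, the message $\msgbox{M}{\rho_1\rho_2}$ cannot appear outside \emph{all} of $B$ at $n$, and I claim it cannot appear outside $B'$ at $n$ either—any occurrence of $\msgbox{M}{\rho_1\rho_2}$ at $n$ nested inside some $b_0 \in B$ must in fact be nested inside a box meeting the $B'$ membership criterion, because that enclosing box (or one of its outer layers up to $\textsf{msg}(n)$) itself occurred at a causally earlier node. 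Here I would use the observation already stated in the text, that $M$ occurs outside $B$ precisely at the minimal nodes of the cut, together with minimality of $n$: since no $n' \prec_\skel n$ has $\msgbox{M}{\rho_1\rho_2}$ outside $B$, every such $n'$ keeps it boxed, which is what populates $B'$.

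The main obstacle I expect is step~(2): showing that replacing the possibly-large, arbitrary $B$ by the restricted, syntactically-grounded $B'$ does \emph{not} accidentally expose $\msgbox{M}{\rho_1\rho_2}$ outside the boxes at $n$, thereby destroying the unsolvedness. The delicate point is that $B'$ is defined using strict predecessors $n' \prec_\skel n$, whereas the occurrence of $\msgbox{M}{\rho_1\rho_2}$ to be contained is at $n$ itself; I must argue that each outermost box in $\textsf{msg}(n)$ that encloses a copy of $\msgbox{M}{\rho_1\rho_2}$ with an uncompromised recipient either lies in $B'$ or is enclosed in something that does. This is where I would invoke the fact that $n$ is minimal and that a box cannot simply materialize around $\msgbox{M}{\rho_1\rho_2}$ at $n$ without a regular boxing step, which, combined with the skeleton's closure and the definition of \textsf{Cut}, forces the relevant enclosing box to occur at an earlier node $n' \prec_\skel n$.

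Finally, once the minimal node $n$ of the original cut is shown to remain a minimal reception node of $\textsf{Cut}(\msgbox{M}{\rho_1\rho_2},B',\mathbb A)$ at which neither disjunct of the solved condition holds (using $\rho_1 \notin R$ for the first disjunct, and $\textsf{rcv}(b) \notin R$ for all $b \in B'$ by construction for the second), I conclude directly from Definition~\ref{definition:solvedcut} that the new cut is unsolved. The bulk of the writing is the bookkeeping in steps~(1) and~(2); the logical skeleton is short.
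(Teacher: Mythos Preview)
Your proposal rests on a reversed reading of what it means for $n$ to be a $\preceq_{\skel}$-minimal node of $\textsf{Cut}(c,B,\skel)$, where $c=\msgbox{M}{\rho_1\rho_2}$. As the paper notes right after the definition of cut, $c$ occurs \emph{outside} $B$ at every minimal node of the cut, i.e.\ $c\dagger^{B}\textsf{msg}(n)$; it is only at the strict predecessors $n'\prec_{\skel} n$ that one has $c\odot^{B}\textsf{msg}(n')$. You describe $n$ instead as ``a reception node at which $c$ occurs only inside boxes of $B$,'' and make your step~(2) the task of showing that $c$ ``cannot appear outside $B'$ at $n$ either.'' That is exactly the wrong direction: if $c\odot^{B'}\textsf{msg}(n)$ held, then $n$ would not belong to the new cut at all, and you could not exhibit it as an unsolved minimal node.

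The actual work splits the other way. One needs (a)~$c\dagger^{B'}\textsf{msg}(n)$, so that $n$ lies in the new cut, and (b)~$c\odot^{B'}\textsf{msg}(n')$ for every $n'\prec_{\skel} n$, so that $n$ remains minimal there; the unsolvedness conditions ($n$ negative, $\rho_1\notin R$, every recipient in $B'$ uncompromised) are then immediate. The paper obtains~(b) directly from the construction of $B'$: any box of $B$ enclosing an occurrence of $c$ inside some $\textsf{msg}(n')$ has an uncompromised recipient (by unsolvedness of the original cut) and is a subterm of $\textsf{msg}(n')$ strictly containing $c$, hence lies in $B'$. For~(a) the paper argues via the inclusion $B'\subseteq B$, which turns $c\dagger^{B}\textsf{msg}(n)$ into $c\dagger^{B'}\textsf{msg}(n)$. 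Your proposed justification---that ``a box cannot simply materialize around $c$ at $n$ without a regular boxing step''---is not available here: the proposition is stated for arbitrary skeletons, not realized ones, so no causal argument about how boxes arise is permitted at this stage.
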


\begin{proof}
  Let $c=\msgbox{\tilde M}{\rho_1\rho_2}$. From
  Definition~\ref{definition:solvedcut}, there exists a
  $\preceq_{\skel}$-minimal {\em input} node $n$ in $\textsf{Cut}(c,
  B, \skel)$, such that $\rho_1\not\in R$ and $\rho_4\not\in R$ for
  all $\msgbox{\tilde M}{\rho_3\rho_4}\in B$.
  Let us now consider the predecessors $n'$ of $n$ in $\skel$ which,
  by definition of cut, are such that $c\odot^B\textsf{msg}(n')$. We
  consider two cases:
    (i) if none of $n$'s predecessors contains $c$ then
    $B'=\emptyset$ and therefore $\textsf{Cut}(c, \emptyset, \skel)$
    is unsolved as $n$ is a minimal node such that
    $c\dagger^\emptyset\textsf{msg}(n')$;
    (ii) $n_1\ldots n_k$ are $n$'s predecessors such that
    $c\sqsubset\textsf{msg}(n_i)$.  Letting
    $$ B_i=\{b\pp c\sqsubset b\sqsubseteq
    \textsf{msg}(n_i)\text{ and }\textsf{rec}(b)\not\in R\},$$
    $B_i\subseteq B$, because $n$ is a minimal node in
    $\textsf{Cut}(c, B, \skel)$.  Thus, as
    $c\odot^{\bigcup_iB_i}\textsf{msg}(n_i)$, $n$ is also minimal in
    $\textsf{Cut}(c, \bigcup_iB_i, \skel)$.  Finally, as $B'=\bigcup_i
    B_i$, we can conclude that $\textsf{Cut}(c,B',\mathbb A)$ is also
    unsolved.
\end{proof}


\section{Searching for Shapes}
\label{sec:shapesearching}
The results on cuts suggest a possible way of adding nodes to a
skeleton so that it can become realized. We shall now address this
problem and introduce a constructive method for deriving realized
skeletons from non-realized ones.
%
In the sequel, the operation $\skel\ \cup\ \uparrow_m\mbox{\em with }\
m\prec n$, returns the skeleton $\skel'$ consisting of $\skel$ and the
nodes $\{m'\pp m'\Rightarrow^* m\land m'\in \nodes{S}\}$, with the
ordering strengthened so that $m\prec_{\skel'}n$.  Similarly, $\skel\
\cup\ \uparrow_m\mbox{ with }\ n\prec m$ is the corresponding $\skel'$
with the opposite order enrichment $n\prec_{\skel'}m$.
\begin{definition}[Reduction Rules]
  The relation between skeletons $\skel\lts{} S\skel'$, is the
  minimum relation satisfying the following rules:
  \begin{center}
    \begin{equation*}
      \begin{array}{lll}
        \Did{A1}\quad&
        \Rule
        {
          \begin{array}{c}
            n\in\mathbb A\ \land\ \Neg n 
            \qquad\qquad 
            c=\msgbox{\tilde M}{\rho_1\rho_2}
            \qquad\qquad
            c \dagger^B\textsf{msg}(n)
            \\[1.5mm]
            m\in S\backslash\mathbb A\ \land\ \textsf{pos}(m)
            \qquad\quad
            \rho_1\not\in R
            \qquad\qquad
            c\dagger^B\textsf{msg}(m) 
            \\[1.5mm]
            \forall m'\,.\, m'\prec_{\mathbb A}n \lor m'\Rightarrow^+ m
            \quad\mbox{implies }\quad 
            c\odot^B\textsf{msg}(m')
          \end{array}
        }
        {
          \skel\lts{}S\skel\ \cup\ \uparrow_m\mbox{ with }\ m\prec n
        }
        \\\\
        \Did{A2}\quad&
        \Rule
        {
          n\in\mathbb A\ \land\ \pos n
          \quad
          \begin{array}{ll}
            \msg n=\msgbox{\tilde M}{\rho_1\rho_2}
            &\quad
            \lnot(\exists m'\in\textsf{neg}(\skel)\pfx n\prec m'
            \land
            \textsf{msg}(m')=\textsf{msg}(n))\\
            \rho_2\not\in R
            &\quad
            m\in(S)
            \ \land\
            \Neg m
            \ \land\
            \textsf{msg}(m)=\textsf{msg}(n)
          \end{array}
        }
        {
          \skel\lts{}S\skel\ \cup\ \uparrow_m\mbox{ with }\ n\prec m
        }
      \end{array}
    \end{equation*}
  \end{center}
  where the set of strands $S$ (strand space domain) is the set of
  regular strands.  Observe in rule (A1) that if there is any $B$ that
  satisfies the premise, then
$$B=\{b \pp n'\prec_\skel n\;\text{ s.t. }\ \msgbox{\tilde
  M}{\rho_1\rho_2}
\sqsubset b\sqsubseteq \textsf{msg}(n') \land \textsf{rcv}(b)\not\in R
\ \}$$
%
\end{definition}
We briefly comment the rules above. The first rule adds, when
possible, nodes that explain why a message is found outside a
box. Given a box $c$, the set of boxes $B$ and a node $n$ which is
minimal in \textsf{Cut}$(c,B,\skel)$, we choose $m$ to be the minimal
node preceding $n$ such that $c$ is found outside $B$. Note that $m$
may already be in the skeleton (added together with some $m'$ such
that $m\Rightarrow^* m'$) and the rule still be applicable because
$\preceq$ needs to be updated. The second rule deals with adding a
recipient, if any is found, to an output node. 
\begin{proposition}\label{proposition:reducesunrealized}
  If $\skel\lts{}{S}\skel'$ then $\skel$ is not realized or $\skel$ is
  not DG.
\end{proposition}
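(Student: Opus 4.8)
The plan is to case-split on which rule, (\textbf{A1}) or (\textbf{A2}), justifies the step $\skel\longrightarrow_S\skel'$, and to show that (\textbf{A1}) can only fire when $\skel$ fails to be realized, while (\textbf{A2}) can only fire when $\skel$ fails to be DG (falling back on ``not realized'' if needed). The first case is driven by the cut characterization of Proposition~\ref{theorem:cutrealized}; the second by the definition of delivery guarantee.

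For rule (\textbf{A1}), the premises isolate a reception node $n\in\skel$, a box $c=\msgbox{\tilde M}{\rho_1\rho_2}$ with $\rho_1\notin R$, and a set $B$ such that $c\dagger^B\textsf{msg}(n)$ while $c\odot^B\textsf{msg}(m')$ for every $m'\prec_\skel n$. I would first note that this makes $n$ a node of $\textsf{Cut}(c,B,\skel)$, and that it is $\prec_\skel$-minimal there: any $p\prec_\skel n$ lying in the cut would supply some $q\preceq_\skel p\prec_\skel n$ with $c\dagger^B\textsf{msg}(q)$, contradicting the third premise. I would then show this cut is unsolved in the sense of Definition~\ref{definition:solvedcut}. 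Clause~(1) fails because $n$ is a reception node; clause~(2) fails because $\rho_1\notin R$ and, using the explicit description of $B$ recorded just after the rules---namely that every $b\in B$ satisfies $\textsf{rcv}(b)\notin R$---no box of $B$ has a compromised receiver. An unsolved cut yields, by Proposition~\ref{theorem:cutrealized}, that $\skel$ is not realized, closing this case.

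For rule (\textbf{A2}) I would show $\skel$ is not DG, assuming it is realized (otherwise the conclusion already holds). The premises give a transmission node $n$ with $\textsf{msg}(n)=c=\msgbox{\tilde M}{\rho_1\rho_2}$, $\rho_2\notin R$, and no reception node strictly after $n$ carrying $c$. Suppose, for contradiction, that $\skel$ were DG; then some reception node $n'$ on another strand has $\textsf{msg}(n')=c$, and the negative premise of (\textbf{A2}) forces $\lnot(n\prec_\skel n')$. Using that $\skel$ is realized, I would trace the incoming edge of $n'$ in a witnessing graph of interaction back to a transmission of $c$ and invoke the authentication-test reasoning of the previous subsection to locate a regular origin of $c$ preceding $n'$; identifying that origin with $n$ gives $n\prec_\skel n'$, contradicting the negative premise. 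Hence no such $n'$ exists and $\skel$ is not DG.

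The (\textbf{A1}) half is essentially bookkeeping once the shape of $B$ is used, so I expect the main obstacle to lie in the (\textbf{A2}) half. There the rule only forbids a receiver causally \emph{after} $n$, whereas the DG definition asks merely for the \emph{existence} of a matching reception anywhere in $\skel$; turning ``some receiver of $c$ exists'' into ``a receiver of $c$ lies after $n$'' is exactly where realizedness and the authentication-test machinery have to be brought in, and the treatment of penetrator-forgeable boxes (the case $\rho_1\in R$, where the adversary can itself produce $c$ by rule (\textbf{B})) is the delicate point to get right.
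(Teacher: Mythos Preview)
Your treatment of rule \textbf{(A1)} is the paper's argument, carried out with more care: the paper simply asserts that the displayed cut $\textsf{Cut}(\msgbox{\tilde M}{\rho_1\rho_2},B,\skel)$ is ``clearly not solved'' and invokes Proposition~\ref{theorem:cutrealized}. Your verification that $n$ is $\prec_\skel$-minimal in the cut and that both clauses of Definition~\ref{definition:solvedcut} fail (using the explicit description of $B$) is exactly the content behind that ``clearly''.

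For rule \textbf{(A2)} the paper's proof is a single line: ``we are clearly adding an input node to a pending output,'' and it concludes directly that $\skel$ is not DG. It does not route through realizedness or any authentication-test argument. You have correctly noticed a formal mismatch---the DG definition asks only for \emph{some} matching reception on another strand, while the negative premise of \textbf{(A2)} only excludes receptions strictly \emph{after} $n$---and the paper simply does not address it; the phrase ``pending output'' suggests the authors are reading DG with the intended ordering $n\prec n'$, even though the written definition omits it.

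Your proposed repair, however, does not close the gap. The step ``identifying that origin with $n$'' is unjustified: even when $\skel$ is realized and $\rho_1\notin R$, a regular transmission of $c$ that precedes $n'$ need not be $n$; there may be several regular strands emitting $c$, and nothing forces the one feeding $n'$ to coincide with the particular $n$ witnessing the \textbf{(A2)} premises. And as you yourself flag, when $\rho_1\in R$ the adversary can manufacture $c$ outright via a \textbf{(B)} strand, so no regular origin is required at all. Thus your \textbf{(A2)} argument, while more scrupulous than the paper's, introduces a step that cannot be completed as stated; the paper avoids this by treating ``not DG'' as immediate from the existence of an output with no later receiver.
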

\begin{proof}
  If the reduction $\skel\lts{}{S}\skel'$ is obtained by applying rule
  \Did{A1}, then the cut $\textsf{Cut}(\msgbox{\tilde
    M}{\rho_1\rho_2},B,\skel)$ is clearly not solved. On the other
  hand, if $\skel\lts{}{S}\skel'$ by \Did{A2}, then we are clearly
  adding an input node to a pending output.
\end{proof}
In the sequel, we say that a homomorphism $H:\skel\mapsto\skel'$ is an
{\em augmentation} whenever $H$ is an inclusion (identity on the
domain $\skel$), any node in $\skel'\backslash\skel$ belongs to the
same strand and $\preceq_{\skel'}$ is an extension of $\preceq_\skel$.
Directly from the rules, it follows that:
\begin{proposition}\label{proposition:reductionaugmentation}
  Let $H$ map $\skel$ to $\skel'$ such that $\skel\lts{}S\skel'$. Then
  $H$ is an augmentation.
\end{proposition}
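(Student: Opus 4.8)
The plan is to argue by cases on which rule, \Did{A1} or \Did{A2}, justifies the step $\skel\lts{}S\skel'$, checking in each case the three clauses in the definition of augmentation directly against the definition of the operation $\skel\ \cup\ \uparrow_m$. Since both rules set $\skel'=\skel\ \cup\ \uparrow_m$, differing only in whether the added comparison is $m\prec n$ (rule \Did{A1}) or $n\prec m$ (rule \Did{A2}), essentially the whole argument is common to the two cases, so I would run it once and note that the choice of comparison never matters for the first two clauses.

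The first two clauses fall out immediately from the definition of $\uparrow_m$. By that definition $\textsf{nodes}(\skel')=\textsf{nodes}(\skel)\cup\{m'\pp m'\Rightarrow^* m\land m'\in\textsf{nodes}(S)\}$, so $\textsf{nodes}(\skel)\subseteq\textsf{nodes}(\skel')$ and the map $H$ induced by the reduction is literally the inclusion, giving the first clause. For the second clause, every node of $\skel'\setminus\skel$ satisfies $m'\Rightarrow^* m$ and so lies on the single strand carrying $m$; hence all freshly added nodes belong to one and the same strand, as required. This subsumes the degenerate situation flagged after the rules, in which the prefix of $m$'s strand is already present and $\skel'\setminus\skel$ is empty, so the clause holds vacuously. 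The same observation shows that $\{m'\pp m'\Rightarrow^* m\}$ is backward closed under $\Rightarrow$, so $\skel'$ still satisfies the closure requirement in the definition of skeleton.

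The remaining clause, that $\preceq_{\skel'}$ extends $\preceq_\skel$, carries the real content. The operation $\cup\ \uparrow_m$ manufactures $\preceq_{\skel'}$ from $\preceq_\skel$ purely by \emph{adding} order facts --- the $\Rightarrow$-edges internal to the new prefix together with the single strengthening, closed transitively --- so $\preceq_\skel\subseteq\preceq_{\skel'}$ holds automatically, and together with the inclusion of the first clause this also makes $H$ order-preserving, i.e. a genuine homomorphism. The one point that is not mere bookkeeping, and which I expect to be the main obstacle, is antisymmetry: I must rule out that the new comparison closes a cycle. Because the added nodes form a fresh strand-prefix ending in $m$, any newly created comparability factors through them, so a cycle could result only if $\skel$ already ordered an endpoint of the new comparison against that prefix. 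I would discharge this either by appealing to the fact that $\skel\lts{}S\skel'$ presupposes $\skel'$ to be a well-formed skeleton --- the operation $\cup\ \uparrow_m$ being meaningful only when it returns a partial order --- or, more explicitly, by reading off the rule premises that the fresh transmission is consistently placeable relative to $n$ (in \Did{A1} the new $m$ sits strictly before $n$, while in \Did{A2} nothing already present can follow the maximal new node $m$, so no cycle can form). Once antisymmetry is secured, the three clauses assemble to show that $H$ is an augmentation.
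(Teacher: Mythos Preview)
Your proposal is correct and follows the same approach the paper intends: the paper offers no proof beyond the remark ``Directly from the rules, it follows that'', treating the three clauses of augmentation as immediate from the definition of $\skel\ \cup\ \uparrow_m$. Your case analysis and, in particular, your attention to antisymmetry of $\preceq_{\skel'}$ go further than the paper, which simply stipulates that the operation ``returns the skeleton $\skel'$'' and thereby builds well-formedness into the definition; your first way of discharging the point (that $\skel\lts{}S\skel'$ presupposes $\skel'$ is a skeleton) is exactly this reading.
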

Building on the above proposition, we say that $H$ is of {\em type 1}
({\em type 2}) if it corresponds to the application of a rule 1 (rule
2).

\smallskip

In the sequel $\mathbb A\ltsstar S\mathbb A'$ holds whenever there
exists a finite sequence $\mathbb A_1\lts{}S\ldots\lts{}S\mathbb A_k$
such that $\mathbb A=\mathbb A_1$ and $\mathbb A'=\mathbb
A_k$. Moreover, $\mathbb A\not\rightarrow$ whenever there is no
$\mathbb A'$ such that $\mathbb A\lts{}S\mathbb A'$.  The following
result states that we can always reach all the shapes by repeatedly
applying the rules.
\begin{theorem}[Completeness]\ 
\label{thm:completeness}
  \begin{enumerate}

  \item Let $\skel$ be a single-strand skeleton and $H$ a shape such
    that $\skel\mapsto\skel'$.  Then $\skel\ltsstar S\skel'$ (up-to
    isomorphism).

  \item Let $\skel$ be a single-strand skeleton and $H$ a DG shape
    such that $\skel\mapsto\skel'$.  Then $\skel\ltsstar S\skel'$
    (up-to isomorphism).

  \end{enumerate}
\end{theorem}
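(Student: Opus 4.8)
The plan is to construct the reduction sequence $\skel=\skel_0\lts{}S\skel_1\lts{}S\cdots$ while maintaining, as an invariant, a node-wise injective homomorphism $H_i:\skel_i\mapsto\skel'$ that preserves $\preceq$ and satisfies $H_i\circ G_i=H$, where $G_i:\skel\mapsto\skel_i$ is the inclusion obtained by composing the reductions so far (an augmentation, hence a homomorphism, by Proposition~\ref{proposition:reductionaugmentation}). Since each $H_i$ embeds $\skel_i$ into the \emph{finite} skeleton $\skel'$ and every step strictly enlarges either the node set or the relation $\prec_{\skel_i}$ (both bounded, as they pull back injectively into $\skel'$), the sequence must be finite; let $\skel_k$ be its last term. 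The punchline is that $\skel_k$ is realized (part~1) or realized and DG (part~2). Granting this, $G_k$ is a homomorphism into a realized (resp.\ realized and DG) skeleton, so it lies in the set $Z$ over which the shape $H$ is node-wise minimal; since $H_k$ is node-wise injective and $H_k\circ G_k=H$, we have $G_k\leq H$, and minimality forces $H$ and $G_k$ to be isomorphic, i.e.\ $\skel_k\cong\skel'$, giving $\skel\ltsstar S\skel'$ up to isomorphism. For single-strand $\skel$ the base case $H_0=H$ is automatically node-wise injective, since a homomorphism is injective on each strand.

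It remains to show that whenever the current $\skel_i$ fails to be realized we can extend the invariant by an \Did{A1} step. If $\skel_i$ is not realized, Proposition~\ref{theorem:cutrealized} produces an unsolved cut, and Proposition~\ref{theorem:minimalB} lets us take its box set $B$ to be exactly the one named in rule \Did{A1}. Let $n$ be a $\prec$-minimal node of this cut; since the cut is unsolved (Definition~\ref{definition:solvedcut}), $n$ is a \emph{reception} node with $c=\msgbox{\tilde M}{\rho_1\rho_2}$, $\rho_1\notin R$, and no box of $B$ owned by a compromised role. Its image $H_i(n)$ lies in the corresponding cut of $\skel'$, which is solved because $\skel'$ is realized; as the compromise side-conditions are inherited along $H_i$, this cut is solved by a transmission node, so we may take $m'\prec_{\skel'}H_i(n)$ to be the \emph{first} node on its strand at which $c$ escapes $B$. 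Then $m'$ is a regular transmission node whose strand lies in $S$; we let $m$ be the matching node of that strand, apply \Did{A1} with $m\prec n$, and set $H_{i+1}(m)=m'$. The premises hold: $c\dagger^B\msg(m)$ by the choice of $m'$; all strand-predecessors of $m$ satisfy $c\odot^B$ because $m'$ is the first escaping node; and all predecessors of $n$ satisfy $c\odot^B$ because $n$ is $\prec$-minimal in the cut. The extended $H_{i+1}$ is again node-wise injective and still factors $H$.

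For part~2 the same loop first drives $\skel_i$ to a realized skeleton as above, and we then remove delivery failures using \Did{A2}. If a positive node $n$ carries $\msgbox{\tilde M}{\rho_1\rho_2}$ with $\rho_2\notin R$ and has no matching reception, then, because $\skel'$ is a DG skeleton (Definition~\ref{definition:dgshape}), $H_i(n)$ does possess a matching reception $m'$ on another strand; we apply \Did{A2} with $n\prec m$ and set $H_{i+1}(m)=m'$. An \Did{A2} step can never destroy realizedness: the new reception $m$ satisfies $\msg(m)=\msg(n)$ with $n\prec m$, so any cut to which $m$ might contribute already contains the earlier transmission $n$, which solves it. Hence iterating \Did{A1} to realizedness and then \Did{A2} to delivery terminates in a $\skel_k$ that is realized and DG, and the minimality argument of the first paragraph yields $\skel_k\cong\skel'$.

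The main obstacle is the bookkeeping around the substitution carried by $H_i$: one must verify that $c\dagger^B\msg(n)$ and the failure of the compromise conditions are genuinely inherited by $H_i(n)$ in $\skel'$ (so that the unsolved cut maps to a solved cut that must be witnessed by a regular transmission), while $c\odot^B$ on predecessors is preserved in the direction needed to discharge the \Did{A1} premises. The clean way to handle this is to note that ``found outside $B$'' and ``found only within $B$'' are stable under the structure-preserving substitutions of a homomorphism, so $\textsf{Cut}(c,B,\skel_i)$ maps into the corresponding cut of $\skel'$; what remains is the routine check that taking $m'$ minimal on its strand produces exactly the node demanded by \Did{A1}, together with the equally routine verification that node-wise injectivity and the factorization $H_i\circ G_i=H$ survive each extension.
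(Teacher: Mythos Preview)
Your proof is correct and follows essentially the same approach as the paper: both factor $H$ as $L_i\circ H_i\circ\cdots\circ H_0$ with the $H_j$ augmentations and $L_i$ node-wise injective, locate the next augmentation via an unsolved cut (Propositions~\ref{theorem:cutrealized} and~\ref{theorem:minimalB}) whose image in the realized $\skel'$ supplies the witnessing transmission, and terminate by the finiteness of $\skel'$ together with the minimality of $H$. Your write-up is in fact more careful than the paper's on two points it leaves implicit---that an \Did{A2} step cannot destroy realizedness (needed so that parts~1 and~2 do not interfere), and that the box predicates $\dagger^B,\odot^B$ and the compromise side-conditions are stable under the value-for-value substitutions carried by $H_i$.
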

\begin{proof}
  From Proposition~\ref{proposition:reductionaugmentation}, we only
  have to prove that shapes can be expressed as the composition of
  augmentations of type 1 or 2 (type 2 is only considered when proving
  point 2).  Formally, we show that there exist a $k$ such that for
  every $i\in\{0,1\ldots,k\}$ we have $H = L_i\circ
  H_i\circ\ldots\circ H_1\circ H_0$ where $L_i$ is a node-wise
  injective homomorphism, $H_0$ the identity mapping and
  $H_1,\ldots,H_i$ augmentations.

  The first step is to show how we can find $k$ and inductively
  construct each $H_i$ and $L_i$ starting from the identity:
  \begin{itemize}

  \item {\bf Base Case.} As $H_0$ must be the identity, we chose $L_0
    = H$ noting that $H$ is node-wise injective by definition of
    shape. We then have that $H = L_0\circ H_0$.

  \item {\bf Inductive Case.} Let $H = L_i\circ H_j\circ\ldots\circ
    H_1\circ H_0$ such that $H_0$ is the identity and $H_1,\ldots,H_i$
    are augmentations. If $L_i$ is an isomorphism then $i=k$ and we
    can stop. In fact, by definition of shape, $H$ is the minimum
    realized skeleton hence the image of $H_j\circ\ldots\circ H_1\circ
    H_0$ is isomorphic to $\skel'$, image of $H$.

    Let $L_i$ be not an isomorphism.  Moreover, let
    $L_i:\skel_j\mapsto\skel'$ and $H_j\circ\ldots\circ H_1\circ
    H_0:\skel\mapsto\skel_j$ for some $\skel_j$. We show how to
    construct $H_{i+1}$ and $L_{i+1}$. By definition of shape, as
    $L_i$ is not an isomorphism, $\skel_j$ is not realized. If that is
    the case, then either there is a dangling output (this is to be
    considered only when proving point 2) or, by
    Proposition~\ref{theorem:cutrealized}, there exists an unsolved cut
    $\mathsf{Cut}(\msgbox{\tilde M}{\rho_1\rho_2},B,\skel_j)$ i.e., by
    definition of cut, there exists an input node $m_1$,
    $\preceq_{\skel_j}$-minimal in $\textsf{Cut}(\msgbox
    M{\rho1\rho2}, B, \skel')$, such that $\rho_1\not\in R$ and for
    all $\msgbox{\tilde M}{\rho_3\rho_4}\in B$, $\rho_4\not\in
    R$. Now, as $\skel'$ is realized, all cuts must be solved. Then,
    because $L_i$ is node-wise injective, we can choose a node in the
    pre-image of $L_i$ which is not in $\skel_i$ but solves
    $\mathsf{Cut}(\msgbox{\tilde M}{\rho_1\rho_2},B,\skel_j)$ (or add
    the corresponding input when proving point 2). Adding this node,
    precisely corresponds to an augmentation induced by rule \Did{A1}
    (or \Did{A2}) which will be our $H_{i+1}$. We can then choose
    $L_{i+1}$ to be equal to $L_i$ but also mapping the new added node
    to $\skel'$ accordingly.
  \end{itemize}

  The above procedure shows how to construct the various $H_i$ and
  $L_i$. In order to complete the proof, we need to show that we
  always find the $k$. But this follows by the fact that augmentations
  always increase the size of a skeleton and observing that we stop
  once we reach an isomorphism.
\end{proof}

\begin{example}\rm
  Let $S=\{s_i\}_{i=1,\ldots,5}$, $s_1,s_2\in \rho_1$,
  $s_3,s_4\in\rho_2$, $s_5\in\rho_3$ and such that:
  {\small
  \begin{align*}
    s_1\ =\ &
    +\msgbox{\msgbox{\textsf{secret}}{\rho_1\rho_3}}{\rho_1\rho_2}
    \ \Rightarrow\ 
    -\msgbox{\textsf{reject}}{\rho_2\rho_1}
    \\
    s_2\ =\ &
    +\msgbox{\msgbox{\textsf{secret}}{\rho_1\rho_3}}{\rho_1\rho_2}
    \ \Rightarrow\ 
    -\msgbox{\msgbox{\textsf{newsecret}}{\rho_3\rho_1}}{\rho_2\rho_1}
    \\
    s_3\ =\ &
    -\msgbox{\msgbox{\textsf{secret}}{\rho_1\rho_3}}{\rho_1\rho_2}
    \ \Rightarrow\ 
    +\msgbox{\textsf{reject}}{\rho_2\rho_1}
    \\
    s_4\ =\ &
    -\msgbox{\msgbox{\textsf{secret}}{\rho_1\rho_3}}{\rho_1\rho_2}
    \ \Rightarrow\ 
    +\msgbox{\msgbox{\textsf{secret}}{\rho_1\rho_3}}{\rho_2\rho_3}
    \ \Rightarrow\ 
    -\msgbox{\msgbox{\textsf{newsecret}}{\rho_3\rho_1}}{\rho_3\rho_2}
    \ \Rightarrow\ 
    +\msgbox{\msgbox{\textsf{newsecret}}{\rho_3\rho_1}}{\rho_2\rho_1}
    \\
    s_5\ =\ &
    -\msgbox{\msgbox{\textsf{secret}}{\rho_1\rho_3}}{\rho_2\rho_3}
    \ \Rightarrow\ 
    +\msgbox{\msgbox{\textsf{newsecret}}{\rho_3\rho_1}}{\rho_3\rho_2}
  \end{align*}
  }
  If, for instance, $\rho_2\in R$ and we start from $s_5$, we can then
  apply \Did{A1} for $B=\emptyset$,
  $M=\msgbox{\textsf{secret}}{\rho_1\rho_3}$ and $m$ being the first
  node of the strands $s_1/s_2$. We obtain the following skeleton:
  {\small
  \begin{center}
  \begin{equation}\label{skeleton4}
    \begin{diagram}
                   &\bullet  & \rTo^{\ \msgbox{\msgbox{\textsf{secret}}{\rho_1\rho_3}}{\rho_1\rho_2}\ } 
                   & \preceq & 
                   \rTo^{\ \msgbox{\msgbox{\textsf{secret}}{\rho_1\rho_3}}{\rho_2\rho_3}\ } 
                   & \bullet\\
                   &         &&&&\dImplies\\\\\\
                &\       & \
                   & \       & 
                   \lTo^{\ \msgbox{\msgbox{\textsf{newsecret}}{\rho_3\rho_1}}{\rho_3\rho_2}\ }
                  & \bullet
    \end{diagram}    
  \end{equation}
  \end{center}
  }
  which is a shape for $s_5$. If we start from $s_2$, we can then
  apply \Did{A1} for $B=\emptyset$,
  $M=\msgbox{\textsf{secret}}{\rho_1\rho_3}$ and $m$ being the second
  node of $s_5$. We then have:
  {\small
  \begin{center}
    \begin{diagram}
      \bullet  & \rTo^{\ \msgbox{\msgbox{\textsf{secret}}{\rho_1\rho_3}}{\rho_1\rho_2}\ } 
      & \preceq & 
      \rTo^{\ \msgbox{\msgbox{\textsf{secret}}{\rho_1\rho_3}}{\rho_2\rho_3}\ } 
      & \bullet\\
      \dImplies&&&&\dImplies\\\\\\
      \bullet       & 
      \lTo^{\ \msgbox{\msgbox{\textsf{newsecret}}{\rho_3\rho_1}}{\rho_2\rho_1}\ }
      & \preceq     & 
      \lTo^{\ \msgbox{\msgbox{\textsf{newsecret}}{\rho_3\rho_1}}{\rho_3\rho_2}\ }
      & \bullet
    \end{diagram}    
  \end{center}
  }
  Above we have actually applied \Did{A1} twice, where the second
  application just added the top $\preceq$.  Note that
  (\ref{skeleton4}) differs from the above because the latter has more
  information about $s_2$ but they are both realized (and DG).

  The set of boxes $B$ is not always empty. For instance, for
  $b=\msgbox{\textsf{secret}}{\rho_1\rho_3}$, with strands
  {\small
  \begin{align*}
    s_2'\ =\ &
    +\msgbox{b}{\rho_1\rho_2}
    \ \Rightarrow\ 
    -\msgbox{b}{\rho_2\rho_1}
    \\
    s_4'\ =\ &
    -{\msgbox{b}{\rho_1\rho_2}}
    \ \Rightarrow\ 
    +\msgbox{b}{\rho_2\rho_3}
    \ \Rightarrow\ 
    -\msgbox{b}{\rho_3\rho_2}
    \ \Rightarrow\ 
    +\msgbox{b}{\rho_2\rho_1}
    \\
    s_5'\ =\ &
    -\msgbox{b}{\rho_2\rho_3}
    \ \Rightarrow\ 
    +\msgbox{b}{\rho_3\rho_2}
  \end{align*}
} and applying \Did{A1} to $s_2'$ with $R=\emptyset$, we get the
following skeleton for $M=b$ and $B=\{\msgbox{b}{\rho_1\rho_2}\}$:
{\small
  \begin{center}
    \begin{diagram}
      \bullet  & \rTo^{\ {\msgbox{b}{\rho_1\rho_2}}\ } 
      & \ & 
      \rTo^{\ \msgbox{b}{\rho_2\rho_3}\ } 
      & \bullet\\
      \dImplies&&&&\dImplies\\\\\\
      \bullet       & 
      \lTo^{\ \msgbox{b}{\rho_2\rho_1}\ }
      & \preceq     & 
      \lTo^{\ \msgbox{b}{\rho_3\rho_2}\ }
      & \bullet
    \end{diagram}    
  \end{center}
  }
\end{example}

\section{A Protocol Description Calculus}
\label{sec:choreography}
We illustrate our ideas with the simplest possible calculus.  The
syntax of this minimal choreography language (based on the Global
Calculus \cite{carbone.honda.yoshida:esop07}) is given by the
following grammar: {\small
  \begin{align*}
    C::=&\phantom{{}\mid\quad{}}\Sigma_i\,\interact{\rho_1}{\rho_2}{op_i}{\tilde M_i}\pfx C_i 
& \text{(interaction)}\\
     &\mid\quad \INACT					              &\text{(inactive)}\\
   \end{align*}} \NI Above, the term
 $\Sigma_i\,\interact{\rho_1}{\rho_2}{op_i}{\tilde M_i}\pfx C_i$
 describes an interaction where a branch with label ${\sf op}_i$ is
 non-deterministically selected and a message $\tilde M_i$ is sent
 from role $\rho_1$ to role $\rho_2$. Each two roles in a choreography
 share a private channel hence it would be redundant to have them
 explicit in the syntax \cite{BCDDDY:concur2008}.  Term $\INACT$
 denotes the inactive system.  Given a choreography $C$, we assume
 that the various $\op {op}$, also on different interactions, are
 distinct: given the lack of an iteration operator e.g.  recursion,
 this is a constraint that can be imposed statically and we include in
 the well-formedness condition at the end of this section.

Our mini-language can be equipped with a standard trace semantics with
configurations $C\lts{\mu}{} C'$ where $\mu =
(\rho_1,\rho_2,\textsf{op}_i,\tilde M_i)$ contains the parameters of
the interaction performed i.e. $\interact{\rho_1}{\rho_2}{op_i}{\tilde
  M_i}\pfx C_i\lts{ (\rho_1,\rho_2,\textsf{op}_i,\tilde M_i)}{}{C_i}$. A
sequence of labels $\{\mu_i\}_i$ describes the temporal order in which
the various described communications take place and it is called {\em
  trace}.

 \begin{assumption}[Well-Formedness] A choreography $C$ is well-formed
   whenever:
   \begin{itemize}
   \item All $\op{op}$'s are distinct;

   \item let $\Gamma$ be a set of pairs $\rho:\tilde M$. Then,
     $\Gamma\vdash C$ such that for all $\rho$, $\Gamma(\rho)$ has no
     boxes and $\vdash$ is defined by the following rules:
     {\small 
     \begin{align*}
       \Did{T-Interact}\
       &
       \Rule
       {\tilde M_i\subseteq\Gamma(\rho_1)\qquad\Gamma[\rho_2\mapsto\Gamma(\rho_2)\cup\{\tilde M_i\}]\vdash
         C_i\qquad \rho_2\in{\sf top}(C_i)}
       {\Gamma\vdash\Sigma_i\,\interact{\rho_1}{\rho_2}{op_i}{\tilde
           M_i}\pfx C_i}
       &
       \Did{T-Inact}\
       &
       \Rule
       {}
       {\Gamma\vdash\INACT}
       \end{align*}
       \begin{align*}
       \Did{T-Box$_1$}\
       &
       \Rule
       {\Gamma\vdash C \qquad \tilde M\in\Gamma(\rho_1)}
       {\Gamma[\rho_1\mapsto\Gamma(\rho_1)\cup\{\msgbox{\tilde M}{\rho_1\rho_2}\}]\vdash C}
       &
       \Did{T-Box$_2$}\
       &
       \Rule
       {\Gamma\vdash C\qquad\msgbox{\tilde M}{\rho_1\rho_2}\in\Gamma(\rho_2)}
       {\Gamma[\rho_2\mapsto\Gamma(\rho_2)\cup\{\tilde M\}]\vdash C}
     \end{align*}}
     where $\TOP{\Sigma_i\,\interact{\rho_1}{\rho_2}{op_i}{\tilde
         M_i}\pfx C_i}=\{\rho_1\}$ and $\TOP{\INACT}=\mathcal R$.
   \end{itemize}
 \end{assumption}
 The rules above are a simple static check for ensuring that a box
 $\msgbox{\tilde M}{\rho_1\rho_2}$ always originate by an interaction
 from $\rho_1$ and can only be opened by $\rho_2$ upon reception of
 the box (maybe nested in other boxes). An environment $\Gamma$ is a
 function that associates a set of messages to a role.
 \Did{T-Interact} checks $\Gamma(\rho_1)$ contains each $\tilde M_i$
 and allows $\rho_2$ to use $\tilde M_i$ in $C_i$. Moreover, the rules
 checks that $\rho_2$ is the sender in $C_i$. \Did{T-Box$_1$} says
 that if $\rho_1$ knows $\tilde M$ then it can also create $\msgbox
 {\tilde M}{\rho_1\rho_2}$ for any $\rho'$. Dually, in
 \Did{T-Box$_2$}, if $\rho_2$ knows $\msgbox{\tilde M}{\rho_1\rho_2}$
 then it also knows $\tilde M$. Rule \Did{T-Inact} allows to type
 $\INACT$ with any $\Gamma$. 

\begin{example}[Buyer-Seller Protocol]\rm\label{example:buyersellerprotocol}
  Hereby, we report a Buyer-Seller financial protocol
  \cite{carbone.honda.yoshida:esop07,CG09}.  A buyer \textsf{Buyer}
  asks a seller \textsf{Seller} for a quote about a product. If the
  quote is accepted, \textsf{Buyer} will send its credit card
  \texttt{card} together with the accepted quote to \textsf{Seller}
  who will forward it to a bank \textsf{Bank}. The bank will check if
  payment can be done and, if so, reply with a receipt
  \texttt{receipt} which will be forwarded to \textsf{Buyer} by
  \textsf{Seller}.  In our mini-language we use boxes to make sure
  that the credit card number can only be read by \textsf{Bank} and
  that \textsf{Seller} does not change the accepted quote: {\small
\begin{align*}
  1.\quad &
  \interact{\textsf{Buyer}}{\textsf{Seller}}{\textsf{Req}}{\texttt{prod}}\pfx\
  \interact{\textsf{Seller}}{\textsf{Buyer}}{\textsf{Reply}}{\texttt{quote}}\pfx\\
  2.\quad & (\ \interact{\textsf{Buyer}}{\textsf{Seller}}{\textsf{Accept}}
  {\msgbox{(\texttt{quote},\texttt{card})}{\textsf{Buyer}\textsf{Bank}}}\pfx\
  \interact{\textsf{Seller}}{\textsf{Bank}}{\textsf{Pay}}
  {(\texttt{quote},\msgbox{(\texttt{quote},\texttt{card})}{\textsf{Buyer}\textsf{Bank}})}\pfx\\
  3.\quad & \phantom{(\ }
  (\ \interact{\textsf{Bank}}{\textsf{Seller}}{\textsf{Ok}}
  {\msgbox{\texttt{receipt}}{\textsf{Bank}\textsf{Buyer}}}\pfx
  \interact{\textsf{Seller}}{\textsf{Buyer}}{\textsf{Succ}}
  {\msgbox{\texttt{receipt}}{\textsf{Bank}\textsf{Buyer}}}\\
  4.\quad & \phantom{(\
    \interact{\textsf{Bank}}{\textsf{Seller}}{\textsf{Pay}}
    {\texttt{card}}\pfx\  (\ } \qquad\qquad\qquad\qquad+\\
  5.\quad & \phantom{(\ (\ }
  \interact{\textsf{Bank}}{\textsf{Seller}}{\textsf{NotOk}}
  {\textsf{reason}}\pfx
  \interact{\textsf{Seller}}{\textsf{Buyer}}{\textsf{Fail}}{\textsf{reason}}\ )\\
  6.\quad & \qquad\qquad\qquad\qquad\qquad\qquad+\\
  7.\quad & \phantom{(\ }
  \interact{\textsf{Buyer}}{\textsf{Seller}}{\textsf{Reject}}{})
\end{align*}} \NI Line 1. denotes the quote request and reply. Lines
2. and 7. are computational branches corresponding to acceptance and
rejection of the quote respectively. If the quote is accepted,
\textsf{Buyer} will send its credit card in the box
$\msgbox{\textsf{quote},\textsf{card}}{\textsf{Buyer},\textsf{Bank}}$
meaning that \textsf{Seller} cannot see it. The box is then forwarded
to \textsf{Bank} together with the quote offered by \textsf{Seller}
who checks that everything is fine (line 2.). If the transaction can
be finalised, a receipt is forwarded to \textsf{Buyer}.  Otherwise, a
\textsf{NotOK} message will be delivered.  \textsf{Bank} boxes the
receipt so that it cannot be seen or changed by \textsf{Seller}.

\end{example}

\subsection{Abstract Strand Semantics}
The {\em abstract strand} semantics (AS semantics) is the minimum
function $\absem{-}:C\rightarrow 2^S\times(\mathcal R\rightarrow2^S)$
(for $S$ a set of strands) satisfying the rules in
Table~\ref{table:abstractsemantics}. The function inputs a
choreography and returns a set of strands paired with a function that
maps strands into a role $\rho$ in $\mathcal R$ (all the possible runs
for $\rho$).  
 \begin{table}[t]
{\small
\begin{displaymath}
  \begin{array}{rl}
    \Did{AS-Com}\ 
    &
    \Rule
    {
      \begin{array}{l}
        \absem{C_i}={(S_i,\who_i)}
      \end{array}
    }
    {
      \absem{\Sigma_i\,\interact{\rho_1}{\rho_2}{op_i}{\tilde M_i}\pfx C_i}
      =
      \bigcup_i
      \left(
        \begin{array}{l}
          \op{extend}(S_i,\msgbox{(\op{op_i},\tilde M_i)}{\rho_1\rho_2},\rho_1,\rho_2,\who_i)
        \end{array}
      \right)
    }
  \end{array}
\end{displaymath}\\
\begin{displaymath}
  \begin{array}{rl}
  \Did{AS-Zero}\ 
  &
  \Rule
  {
  }
  {
    \absem{\INACT}={(\{+\bullet^\rho\}_\rho,\lambda \rho\pfx\{+\bullet^\rho\})}
  }
  \end{array}
\end{displaymath}} 
\caption{Abstract Strand Semantics for Choreography}
\label{table:abstractsemantics}
\end{table}
These strands are templates, and we may use substitutions to ``plug
in'' alternate values for the parameters in the choreography.  Since
these parameters do not include the labels $\mathsf{op}_i$, we define: 

A substitution $\sigma$ is a \emph{parameter substitution} if for
every label $\mathsf{op}_i$, $\sigma(\mathsf{op}_i)=\mathsf{op}_i$.
The \emph{strand space of a choreography} $C$ is the strand space
generated by applying parameter substitutions to $\absem{C}$.  We say
that a skeleton $\skel$ is \emph{over} $\absem{C}$ if all of its
strands belong to this strand space.

Rule \Did{AS-Zero} gives semantics to the inactive choreography
$\INACT$ by creating a strand $+\bullet^\rho$ for each role
$\rho\in\mathcal R$. Rule \Did{AS-Com} gives the semantics to the term
(interaction) of a choreography. The idea is to prefix, for every
branch, every strand of $\rho_1$ with $+\msgbox{(\op{op_i},\tilde
  M_i)}{\rho_1\rho_2}$ and every strand of $\rho_2$ with
$-\msgbox{(\op{op_i},\tilde M_i)}{\rho_1\rho_2}$ where, in general,
$(\op{op},\tilde M)$ denotes the vector $(\op{op},M_{1}, \ldots,
M_{k})$. The main part is played by the function $\op{extend}$ hereby
defined as:
\begin{displaymath}
  \op{extend}(S,M,\rho_1,\rho_2,\who)
  = 
  \left(
  \begin{array}{l}
    S \backslash(\who(\rho_1)\cup\who(\rho_2))\quad\cup\\
    \left\{a\Rightarrow s\ \mid\ 
    \begin{array}{l}
      (s\in\who(\rho_1)\land a=+M)\quad\lor \\
      (s\in\who(\rho_2)\land a=-M)
    \end{array}
    \right\}
  \end{array}
\right)
\end{displaymath}
The above definition says that we include all those strands which are
not in $\who(\rho_1)$ and in $\who(\rho_2)$. Then, we must prefix all
those strands in $\who(\rho_1)$ with node $+M$ and all those strands
in $\who(\rho_2)$ with $-M$.  For well-formed choreographies, we have
the following:
\begin{proposition}\label{proposition:strandconsistency}
  Let $C$ be a well-formed choreography and $(S,\who{})$ its
  semantics. Then each message $\msgbox{\tilde M}{\rho_1\rho_2}$
  always originates in $\who(\rho_1)$ and can only be opened in
  $\who(\rho_2)$. 
\end{proposition}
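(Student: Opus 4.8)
The plan is to prove both properties simultaneously by induction on the structure of $C$, driven by the well-formedness derivation $\Gamma\vdash C$, after strengthening the statement into an invariant that couples the typing context with the generated strand space $\absem{C}=(S,\who)$. The invariant I would carry is, roughly: for every box $c=\msgbox{\tilde M}{\rho_1\rho_2}$, (i) $c$ occurs as a subterm of a node of a strand $s\in\who(\rho)$ only when $c$ is derivable, via \Did{T-Box$_1$}/\Did{T-Box$_2$}, from the knowledge available to $\rho$; (ii) if $\rho\neq\rho_1$ and $c$ is so derivable, then $\rho$ acquired $c$ by \emph{reception}, i.e.\ $c$ already occurs at an earlier reception node of $s$, never by fresh construction; and (iii) the cleartext contents $\tilde M$ occur outside $c$ on a strand $s\in\who(\rho)$ only when $\rho=\rho_2$ or $\rho$ already knew $\tilde M$ independently of $c$. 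Clauses (ii) and (iii) are exactly what make origination land on $\rho_1$ and opening land on $\rho_2$.

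First I would dispatch the base case: by \Did{AS-Zero}, $\absem{\INACT}$ consists only of the singleton strands $+\bullet^\rho$, which carry no boxes, so the invariant holds vacuously. For the inductive step I would consider the interaction $\Sigma_i\interact{\rho_a}{\rho_b}{op_i}{\tilde N_i}\pfx C_i$ handled by \Did{AS-Com}, whose effect via $\op{extend}$ is to prepend a single positive node $+M_i$ (with $M_i=\msgbox{(\op{op_i},\tilde N_i)}{\rho_a\rho_b}$) to every strand of $\who_i(\rho_a)$ and a single negative node $-M_i$ to every strand of $\who_i(\rho_b)$. I would read off from the premise $\tilde N_i\subseteq\Gamma(\rho_a)$ of \Did{T-Interact} that $\rho_a$ already possesses everything it transmits, and I would use the fact that the \emph{only} rule introducing a box is \Did{T-Box$_1$}---which creates $\msgbox{\tilde M}{\rho_1\rho_2}$ solely in $\Gamma(\rho_1)$---while the \emph{only} rule exposing its contents is \Did{T-Box$_2$}, available solely to $\rho_2$. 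Consequently, the freshly prepended positive node on a $\rho_a$-strand is the first place any box built by $\rho_a$ (in particular $M_i$, and, when $\rho_a=\rho_1$, any $c$ newly boxed from contents $\rho_a$ holds) can appear, so it witnesses origination on a strand of $\who(\rho_1)$; dually, the node prepended to a $\rho_b$-strand is a reception, hence can never be a point of origination, which re-establishes clause (ii) for the enlarged strands.

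To finish, I would combine the clauses. For origination, any strand $s\in\who(\rho)$ with $\rho\neq\rho_1$ on which $c$ appears has, by (ii), its first occurrence of $c$ at a reception node, so $c$ cannot originate there; thus every origination node of $c$ lies on a $\who(\rho_1)$-strand. For opening, clause (iii) is precisely the claim that the contents leave the box only on $\rho_2$-strands. Throughout I would lean on the well-formedness side-condition that the distinct labels $\op{op_i}$ make each interaction's box $M_i$ uniquely identifiable, so that ``the box that appeared'' can always be traced to a single interaction, and on the hypothesis that the initial $\Gamma$ is box-free, so that no box is present before some \Did{T-Box$_1$} step creates it.

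I expect the main obstacle to be the mismatch between the two derivations being coordinated. The well-formedness derivation interleaves the syntax-directed rules \Did{T-Interact}/\Did{T-Inact} with \Did{T-Box$_1$}/\Did{T-Box$_2$}, and the latter two rewrite $\Gamma$ without touching $C$, so they have no counterpart among the \Did{AS-Com}/\Did{AS-Zero} steps that build the strands. The real work is to show that the \Did{T-Box}-closure of each $\Gamma(\rho)$ faithfully captures exactly ``what $\rho$ may legitimately send,'' and in particular that this closure can neither manufacture a box anywhere but at its owner nor expose its contents anywhere but at its recipient---that is, proving clause (ii) robustly, so that a non-owner's possession of a box always traces back to a prior reception rather than to spurious re-boxing. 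Making that closure argument precise, rather than the per-node bookkeeping, is where the difficulty lies.
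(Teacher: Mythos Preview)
The paper does not actually supply a proof of this proposition: it is stated immediately after the typing rules and then left, presumably because the authors regard it as a direct consequence of the design of \Did{T-Box$_1$} and \Did{T-Box$_2$} together with the box-free initial $\Gamma$. So there is no ``paper's proof'' to compare against.

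Your proposal is a reasonable and essentially correct way to discharge what the paper leaves implicit. The structural induction on $C$, coupled with a strengthened invariant that ties each $\Gamma(\rho)$ to what may appear on $\who(\rho)$-strands, is the natural strategy, and you have correctly isolated the one real subtlety: the typing derivation is not syntax-directed, since \Did{T-Box$_1$}/\Did{T-Box$_2$} rewrite $\Gamma$ without consuming any of $C$, whereas the semantic rules \Did{AS-Com}/\Did{AS-Zero} are purely syntax-directed. Your clauses (ii) and (iii) are exactly what is needed to bridge that mismatch, and your observation that the initial $\Gamma$ contains no boxes is the anchor that makes clause (ii) go through. One small point worth tightening: ``opened'' in the statement is informal, and your clause (iii) as written (``$\rho$ already knew $\tilde M$ independently of $c$'') needs to be phrased carefully so that it is inductively stable---the cleanest formulation is probably to track, for each $\rho$, the set of messages obtainable from $\Gamma(\rho)$ \emph{without} invoking \Did{T-Box$_2$} on $c$, and show that the contents $\tilde M$ escape $c$ on a $\who(\rho)$-strand only when they already lie in that set or $\rho=\rho_2$. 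With that adjustment the induction closes.
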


\smallskip 

\begin{example}[Semantics of the Buyer-Seller Protocol]\rm
  Unlike in \cite{CG09}, because of the presence of corrupted roles
  (and participants), we cannot give the semantics of a choreography
  describing a security protocol simply by giving a set of
  executions. Therefore, the semantics of the buyer-seller protocol is
  a set of strands from which we would like to build the possible
  executions depending on which roles are compromised.  Given the
  choreography in Example~\ref{example:buyersellerprotocol}, we get
  the following strands: {\small
  \begin{align*}
    a)\quad &
    +\msgbox{(\op{Req},\op{prod})}{\textsf{B}\textsf{S}}\Rightarrow
    -\msgbox{(\textsf{Reply},\texttt{quote})}{\textsf{S}\textsf{B}}\Rightarrow
    +\msgbox{(\textsf{Accept},\msgbox{(\texttt{quote},\texttt{card})}{\textsf{B}\textsf{Bk}})}{\textsf{B}\textsf{S}}\Rightarrow
    -\msgbox{(\textsf{Succ},\msgbox{\texttt{receipt}}{\textsf{Bk}\textsf{B}})}{\textsf{S}\textsf{B}}
    \\
    b)\quad &
    +\msgbox{(\op{Req},\op{prod})}{\textsf{B}\textsf{S}}\Rightarrow
    -\msgbox{(\textsf{Reply},\texttt{quote})}{\textsf{S}\textsf{B}}\Rightarrow
    +\msgbox{(\textsf{Accept},\msgbox{(\texttt{quote},\texttt{card})}{\textsf{B}\textsf{Bk}})}{\textsf{B}\textsf{S}}\Rightarrow
    -\msgbox{(\textsf{Fail},\texttt{reason})}{\textsf{S}\textsf{B}}
    \\
    c)\quad &
    +\msgbox{(\op{Req},\op{prod})}{\textsf{B}\textsf{S}}\Rightarrow
    -\msgbox{(\textsf{Reply},\texttt{quote})}{\textsf{S}\textsf{B}}\Rightarrow
    +\msgbox{\textsf{Reject}}{\textsf{B}\textsf{S}}
    \\
    d)\quad & 
    -\msgbox{(\op{Req},\op{prod})}{\textsf{B}\textsf{S}}\Rightarrow
    +\msgbox{(\textsf{Reply},\texttt{quote})}{\textsf{S}\textsf{B}}\Rightarrow
    -\msgbox{(\textsf{Accept},\msgbox{(\texttt{quote},\texttt{card})}{\textsf{B}\textsf{Bk}})}{\textsf{B}\textsf{S}}\Rightarrow    
    \\ &
    \Rightarrow+\msgbox{(\textsf{Pay},\texttt{quote},\msgbox{(\texttt{quote},\texttt{card})}{\textsf{B}\textsf{Bk}})}{\textsf{S}\textsf{Bk}}\Rightarrow
    -\msgbox{(\textsf{Ok},\msgbox{\texttt{receipt}}{\textsf{Bk}\textsf{B}})}{\textsf{Bk}\textsf{S}}\Rightarrow
    +\msgbox{(\textsf{Succ},\msgbox{\texttt{receipt}}{\textsf{Bk}\textsf{B}})}{\textsf{S}\textsf{B}}
    \\
    e)\quad & 
    -\msgbox{(\op{Req},\op{prod})}{\textsf{B}\textsf{S}}\Rightarrow
    +\msgbox{(\textsf{Reply},\texttt{quote})}{\textsf{S}\textsf{B}}\Rightarrow
    -\msgbox{(\textsf{Accept},\msgbox{(\texttt{quote},\texttt{card})}{\textsf{B}\textsf{Bk}})}{\textsf{B}\textsf{S}}\Rightarrow
    \\ &
    \Rightarrow+\msgbox{(\textsf{Pay},\texttt{quote},\msgbox{(\texttt{quote},\texttt{card})}{\textsf{B}\textsf{Bk}})}{\textsf{S}\textsf{Bk}}\Rightarrow
    -\msgbox{(\textsf{NotOk},\texttt{reason})}{\textsf{Bk}\textsf{S}}\Rightarrow
    +\msgbox{(\textsf{Fail},\texttt{reason})}{\textsf{S}\textsf{B}}
    \\
    f)\quad & 
    -\msgbox{(\op{Req},\op{prod})}{\textsf{B}\textsf{S}}\Rightarrow
    +\msgbox{(\textsf{Reply},\texttt{quote})}{\textsf{S}\textsf{B}}\Rightarrow
    -\msgbox{\textsf{Reject}}{\textsf{B}\textsf{S}}
    \\
    g)\quad &
    -\msgbox{(\textsf{Pay},\texttt{quote},\msgbox{(\texttt{quote},\texttt{card})}{\textsf{B}\textsf{Bk}})}{\textsf{S}\textsf{Bk}}\Rightarrow
    +\msgbox{(\textsf{Ok},\msgbox{\texttt{receipt}}{\textsf{Bk}\textsf{B}})}{\textsf{Bk}\textsf{S}}
    \\
    h)\quad &
    -\msgbox{(\textsf{Pay},\texttt{quote},\msgbox{(\texttt{quote},\texttt{card})}{\textsf{B}\textsf{Bk}})}{\textsf{S}\textsf{Bk}}\Rightarrow
    +\msgbox{(\textsf{NotOk},\texttt{reason})}{\textsf{Bk}\textsf{S}}
  \end{align*}} where $\textsf{B}$ is the buyer, $\textsf{S}$ is the
seller and $\textsf{Bk}$ is the bank. Above, strands $a)$, $b)$ and
$c)$ belong to \textsf{B} while $d)$, $e)$ and $f)$ belong to
\textsf{S}. Strands $g)$ and $h)$ are instead the local behaviour of
\textsf{Bk}.
\end{example}

\subsection{Realized Skeletons for Choreography}
We now apply the theory developed in the previous section to abstract
spaces which are in fact the semantics of a choreography.

In the sequel we say that $\skel$ is {\em over} $\absem{C}$ whenever
it is obtained from the regular, non compromised strands in
$\absem{C}$.  The following result states that whenever \Did{A1} is
not applicable, we have reached a realized skeleton.


\begin{lemma}[Realized Skeletons]\label{lemma:soundness:realized}
  Let $C$ be a well-formed choreography and let $\skel$ be a skeleton
  over $\absem{C}$ such that \Did{A1} is not applicable. Then $\skel$
  is realized.
\end{lemma}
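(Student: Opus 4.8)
The plan is to argue by contraposition through the cut characterization of realization. By Proposition~\ref{theorem:cutrealized} it suffices to show that every cut in $\skel$ is solved, so I assume toward a contradiction that some cut $\textsf{Cut}(\msgbox{\tilde M}{\rho_1\rho_2},B,\skel)$ is unsolved, and I then exhibit an application of \Did{A1}, contradicting the hypothesis that \Did{A1} is not applicable. Writing $c=\msgbox{\tilde M}{\rho_1\rho_2}$, the first move is to normalize the cut using Proposition~\ref{theorem:minimalB}: this lets me replace $B$ by the minimal set $B'$ consisting exactly of the boxes that contain $c$ and occur in the $\skel$-predecessors of a fixed $\preceq_\skel$-minimal reception node $n$ of the cut. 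Because the cut is unsolved, clause~2 of Definition~\ref{definition:solvedcut} fails at $n$, so $\rho_1\notin R$ and every box in $B'$ has an uncompromised opener. At $n$ we then have $c\dagger^{B'}\msg n$ with $\Neg n$, while at every strict $\skel$-predecessor $m'$ of $n$ the box is found only within $B'$, i.e. $c\odot^{B'}\msg{m'}$; this already discharges the $\skel$-side of the predecessor premise of \Did{A1}.

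The crux is to produce the regular witness $m$ demanded by \Did{A1}, and this is where well-formedness of $C$ enters, through Proposition~\ref{proposition:strandconsistency}. Since $c$ has the uncompromised creator $\rho_1\notin R$, the box $c$ can originate only on a strand of $\who(\rho_1)$, and the choreography strand space does contain such an originating strand. Taking a fresh instance of it (the strand space is closed under parameter substitutions, so a copy lying in $S\setminus\skel$ is available), I let $m$ be its originating node for $c$. By construction $\pos m$ and $\rho_1\notin R$ hold; moreover $c$ appears at $m$ but at no earlier node of that strand, so every $m'\Rightarrow^+ m$ satisfies $c\odot^{B'}\msg{m'}$ vacuously, completing the predecessor premise. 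It remains to check $c\dagger^{B'}\msg m$: because the box-nesting discipline enforced by the typing rules \Did{T-Box$_1$} and \Did{T-Box$_2$} makes $\rho_1$ emit $c$ at the outermost level relative to the boxes of $B'$, the occurrence of $c$ at $m$ is not nested inside any box of $B'$.

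With $n$, $c$, $B'$ and $m$ in hand, all premises of \Did{A1} are satisfied, so the step $\skel\lts{}S\skel\ \cup\ \uparrow_m$ (with $m\prec n$) is legal, contradicting the assumption that \Did{A1} is not applicable. Hence $\skel$ has no unsolved cut and, by Proposition~\ref{theorem:cutrealized}, $\skel$ is realized.

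I expect the delicate point to be the simultaneous choice of $m$ outside $\skel$ and with $c$ genuinely outside $B'$. The second requirement rests essentially on Proposition~\ref{proposition:strandconsistency} together with the box-nesting invariant guaranteed by well-formedness, which is exactly what distinguishes this lemma from the general setting and justifies restricting to strand spaces of the form $\absem{C}$. The first requirement uses both the minimality of $n$ in the cut, which forbids any $\skel$-internal transmitter of $c$ lying $\preceq_\skel$-below $n$, and the availability of fresh strand instances in the choreography strand space; I would make the latter precise by noting that if all candidate regular transmitters already occurred in $\skel$, minimality of $n$ would place one of them below $n$ with $c\dagger^{B'}$, contradicting $c\odot^{B'}\msg{m'}$ for $\skel$-predecessors.
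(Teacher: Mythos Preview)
Your overall strategy---contraposition via Proposition~\ref{theorem:cutrealized} and normalizing the unsolved cut with Proposition~\ref{theorem:minimalB}---matches the paper's proof. The gap is in your construction of the witness $m$. You always take $m$ to be an \emph{originating} node for $c$ on a fresh $\rho_1$-strand, and then assert that ``the box-nesting discipline \ldots\ makes $\rho_1$ emit $c$ at the outermost level relative to the boxes of $B'$.'' This claim is not justified, and in general it fails. In the choreography semantics every transmitted message is itself a box $\msgbox{(\mathsf{op},\ldots)}{\rho_1\rho'}$, and the typing rules \Did{T-Box$_1$}, \Did{T-Box$_2$} permit $\rho_1$ to nest $c$ inside further boxes before sending; any of those enclosing boxes may well belong to $B'$ (they strictly contain $c$, may have uncompromised receivers, and---being syntactic messages, not nodes---may occur in predecessors of $n$ in $\skel$ regardless of whether your strand is ``fresh''). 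When that happens you have $c\odot^{B'}\msg m$ rather than $c\dagger^{B'}\msg m$, so the premise of \Did{A1} is not met by your $m$.

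The paper addresses exactly this by splitting on whether $B'=\emptyset$. If $B'=\emptyset$, your originator argument is the right one: $m$ is the first node on a $\rho_1$-strand where $c$ appears, and vacuously $c\dagger^{\emptyset}\msg m$. If $B'\neq\emptyset$, however, the relevant witness is not the \emph{creator} of $c$ but the \emph{opener} of some box $\msgbox{\tilde M'}{\rho_3\rho_4}\in B'$: since $\rho_4\notin R$, only a regular $\rho_4$-strand can extract $c$ from that box, and well-formedness (the fact that openings are performed consistently along the same choreography branches as creations) supplies such a transmission node on a $\rho_4$-strand with $c\dagger^{B'}$ there but $c\odot^{B'}$ at all its strand-predecessors. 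You need this second case; without it the argument does not go through.
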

\begin{proof}

  By Proposition~\ref{theorem:cutrealized}, $\skel'$ is realized if
  and only if all its cuts are solved. Let us assume, by
  contradiction, that $\textsf{Cut}(\msgbox M{\rho_1\rho_2}, B,
  \skel)$ is unsolved for some $\msgbox M{\rho_1\rho_2}$ and $B$.

  By Proposition~\ref{theorem:minimalB}, we know that also
  $\textsf{Cut}(\msgbox M{\rho_1\rho_2}, B', \skel)$ is unsolved for
  $B'=\{b \pp n'\prec_\skel n\;\text{ s.t. }\ \msgbox{\tilde
    M}{\rho_1\rho_2} \sqsubset b\sqsubseteq \textsf{msg}(n') \land
  \textsf{rcv}(b)\not\in R \ \}$ for some $\preceq_{\skel}$-minimal
  input node $n$ in $\textsf{Cut}(\msgbox{M}{\rho_1\rho_2},B,\mathbb
  A)$ and $\rho_1\not\in R$. As a consequence, we also have that
  $\msgbox{\tilde M}{\rho_1\rho_2} \dagger^{B'}\textsf{msg}(n)$.

  Now, if we prove the existence of some positive node $m\not\in\skel$
  such that $\forall m'\,.\, m'\prec_{\mathbb A}n \lor m'\Rightarrow^+
  m$ implies $\msgbox{\tilde
    M}{\rho_1\rho_2}\odot^{B'}\textsf{msg}(m')$ where $\msgbox{\tilde
    M}{\rho_1\rho_2}\dagger^{B'}\textsf{msg}(m)$ and $m\not\in R$ then
  we can apply \Did{A1} to $\skel$ hence having a contradiction. We
  distinguish two cases:
  \begin{itemize}

  \item $B'=\emptyset$. In this case, the unsolved cut is saying that
    we must explain where the box $c$ has been created. As $\rho_1$ is
    not compromised, we must add a node belonging to $\rho_1$ sending
    $c$. The existence of such a node is ensured by well-formedness.

  \item $B'\not=\emptyset$. As $B$ is non-empty, then we must explain
    how $c$ has come out of some message box $\msgbox{\tilde
      M'}{\rho_3\rho_4}$ in $B$. But if that is the case, as $\rho_4$
    is not compromised, a node belonging to $\rho_4$ must have
    performed such operation. The existence of such a node is ensured
    by well-formedness.

  \end{itemize}
  \NI Note that in both cases above, we are exploiting the fact that
  the two well-formedness conditions impose that the operations for
  creation and opening of a box are performed consistently on the same
  choreography branches i.e. role strands.
\end{proof}

The following result states that whenever \Did{A2} is not applicable
to $\skel$ then $\skel$ is DG. 
\begin{lemma}\label{lemma:soundness:DG}
  Let $C$ be a well-formed choreography and let $\skel$ be a skeleton
  over $\absem{C}$ such that \Did{A2} is not applicable. Then $\skel$
  is DG.
\end{lemma}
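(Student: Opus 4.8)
The plan is to verify the DG condition node by node. I fix an arbitrary positive node $n\in\skel$ whose message is a box $\msgbox{\tilde M}{\rho_1\rho_2}$ with $\rho_2\notin R$; these are exactly the nodes constrained by the definition of a DG skeleton, the only other transmissions produced by the semantics being the non-box nodes $+\bullet^\rho$ of \Did{AS-Zero}. I must exhibit a negative node $n'$ lying on \emph{another} strand with $\msg{n'}=\msg n$. Since $\skel$ is over $\absem C$, the message $\msg n$ has the form $\msgbox{(\op{op},\tilde M')}{\rho_1\rho_2}$, where the label $\op{op}$ is introduced by the unique interaction of $C$ that produced the node through rule \Did{AS-Com}; because parameter substitutions fix labels, $\op{op}$ survives in $\msg n$.

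First I would establish an auxiliary fact: on any single strand over $\absem C$ all box messages are pairwise distinct. Indeed, the labels $\op{op}$ are globally distinct by well-formedness, each strand follows a single branch of the (recursion-free) choreography $C$ and so meets each interaction at most once, and parameter substitutions leave labels untouched. Two nodes of opposite polarity carrying the same box message can therefore never share a strand, since distinctness would force them to be the same node. Consequently it suffices to produce \emph{some} negative node of $\skel$ with message $\msg n$: it is automatically on a strand other than $n$'s. If such a node already occurs in $\skel$, the DG requirement for $n$ is discharged at once.

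The remaining case is that $\skel$ contains no negative node at all with message $\msg n$, and I would show this is impossible when \Did{A2} is inapplicable. The absence of such a node makes the premise $\lnot(\exists m'\in\textsf{neg}(\skel)\pfx n\prec m'\land\textsf{msg}(m')=\textsf{msg}(n))$ of \Did{A2} hold vacuously, so it remains to supply a witness $m$ for the last premise. Here rule \Did{AS-Com} is decisive: the interaction carrying label $\op{op}$ runs from $\rho_1$ to $\rho_2$ and, while prefixing the sending node $+\msgbox{(\op{op},\tilde M')}{\rho_1\rho_2}$ onto the $\rho_1$-strands, it simultaneously prefixes the matching reception $-\msgbox{(\op{op},\tilde M')}{\rho_1\rho_2}$ onto every $\rho_2$-strand, with an identical message template. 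Writing $n$ as a node of $\sigma(s)$ for a $\rho_1$-template $s$ and parameter substitution $\sigma$, the companion $\rho_2$-template $s'$ gives $\sigma(s')$, whose reception node $m$ satisfies $\msg m=\msg n$; and since $\rho_2\notin R$ this strand is regular, i.e. $m$ is a node of a strand in $S$. Now $n$, $m$, and the vacuous fourth premise satisfy all side conditions of \Did{A2}, so \Did{A2} applies to $\skel$, contradicting the hypothesis. Hence this case cannot arise and $\skel$ is DG.

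The main obstacle is precisely this last step: guaranteeing, from well-formedness (Proposition~\ref{proposition:strandconsistency}) and the exact shape of \Did{AS-Com}, that whenever the destination $\rho_2$ is uncompromised a regular strand of role $\rho_2$ receiving \emph{exactly} $\msg n$ exists under the \emph{same} parameter substitution $\sigma$, and that it instantiates every side condition of \Did{A2}. The distinctness-of-labels bookkeeping and the polarity argument are routine by comparison.
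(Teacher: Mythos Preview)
Your proposal is correct and follows essentially the same contrapositive route as the paper: if the DG condition fails at some positive node $n$, then the semantics \Did{AS-Com} supplies a matching regular reception $m$ on a $\rho_2$-strand, so all premises of \Did{A2} are met, contradicting inapplicability. The paper's proof is a two-line version of this; your additional bookkeeping (the label-distinctness argument to guarantee ``another strand'', and the explicit case split on whether a matching negative node already lies in $\skel$) is sound and just fills in detail the paper leaves implicit.
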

\begin{proof}
  If that is not the case then, by definition of delivery guaranteed
  skeleton, we would be able to apply \Did{A2}. This is simply because
  whenever we add a positive node $n$ to $\skel$ we always have
  another strand belonging to a different role and containing a
  negative node $m$ such that $\msg{m}=\msg n$.
\end{proof}

We finally have the following two results:
\begin{theorem}[Soundness]
\label{thm:soundness}
  Let $\skel$ be a single-stranded skeleton over $\absem C$ and let
  $\skel\ltsstar S\skel'\not\rightarrow$. Then, $\skel'$ is a DG
  shape.
\end{theorem}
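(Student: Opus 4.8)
The plan is to decompose the statement into its three ingredients---$\skel'$ realized, $\skel'$ delivery guaranteed, and the induced map $\skel\hookrightarrow\skel'$ node-wise minimal---and to dispatch the first two immediately with the lemmas just proved, reserving the real work for minimality. Since $\skel'\not\rightarrow$, neither \Did{A1} nor \Did{A2} applies to $\skel'$; Lemma~\ref{lemma:soundness:realized} then yields that $\skel'$ is realized and Lemma~\ref{lemma:soundness:DG} yields that $\skel'$ is DG. Iterating Proposition~\ref{proposition:reductionaugmentation} along the trace $\skel\ltsstar S\skel'$, the induced $H\colon\skel\mapsto\skel'$ is a composition of augmentations, hence itself an augmentation and in particular a homomorphism. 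Writing $Z$ for the set of homomorphisms $\skel\mapsto\skel''$ with $\skel''$ realized and DG, we have shown $H\in Z$, so by Definition~\ref{definition:dgshape} it remains only to prove that $H$ is node-wise minimal in $Z$.

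For minimality I would first invoke finiteness of the strand space (up to isomorphism there are only finitely many skeletons over $\absem C$, hence finitely many members of $Z$) to choose a node-wise minimal element $G\colon\skel\mapsto\skel_s$ of $Z$ with $G\le H$. Because node-wise $\le$ is transitive (composites of node-wise injective maps are node-wise injective), such a $G$ is in fact node-wise minimal in all of $Z$, i.e.\ a DG shape, witnessed by a node-wise injective $L\colon\skel_s\mapsto\skel'$ with $L\circ G=H$. The goal now reduces to showing that $L$ is an isomorphism: then $H$ is isomorphic to the minimal element $G$, and minimality being invariant under isomorphism, $H$ is node-wise minimal, so $\skel'$ is a DG shape.

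The heart of the argument is a per-step necessity Claim: if $\skel_i\lts{} S\skel_{i+1}$ and $\skel_i$ embeds node-wise injectively into a realized DG skeleton $T$, then that embedding extends to $\skel_{i+1}$. For a step by \Did{A1}, the rule fires on a $\prec_{\skel}$-minimal reception node of an unsolved cut $\textsf{Cut}(c,B,\skel_i)$ with $c=\msgbox{\tilde M}{\rho_1\rho_2}$, $\rho_1\notin R$, and no opener of a box in $B$ compromised; since $T$ is realized, Proposition~\ref{theorem:cutrealized} forces the image cut in $T$ to be solved, and the analysis of Lemma~\ref{lemma:soundness:realized}---which rests on well-formedness through Proposition~\ref{proposition:strandconsistency}---pins the solving event to a regular transmission of exactly the box that \Did{A1} supplies. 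For a step by \Did{A2} the delivering reception is forced because $T$ is DG. Starting from the node-wise injective $G\colon\skel\hookrightarrow\skel_s$ and replaying the entire trace $\skel\ltsstar S\skel'$ inside $\skel_s$ via the Claim, and noting that distinct augmentation steps add distinct forced nodes, produces a node-wise injective homomorphism $P\colon\skel'\mapsto\skel_s$. Thus $\skel_s$ and $\skel'$ embed node-wise injectively into one another; both being finite, $L$ and $P$ are mutually inverse bijections that preserve and reflect $\preceq$, so $L$ is an isomorphism, as required.

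The main obstacle is the Claim, and within it the requirement that the node a rule adds be matched with a node the realization is \emph{forced} to contain, rather than merely one it \emph{may} contain. The danger is that $T$ (or $\skel_s$) might solve the cut with a different regular strand or a different occurrence of $c$; this is precisely where well-formedness is indispensable, since Proposition~\ref{proposition:strandconsistency} guarantees that $\msgbox{\tilde M}{\rho_1\rho_2}$ can be created only at a $\rho_1$-node and opened only at a $\rho_2$-node, so the choreography fixes the solving event up to the very choice \Did{A1} already makes. A secondary point is bookkeeping for the $\uparrow_m$ operation---an added node may drag in earlier nodes on its strand and only strengthen $\preceq$---but these extra nodes are themselves forced in $T$ by closure of $\mathsf{nodes}$ under $\Rightarrow$ together with $T$ being realized, so they neither break the extension nor threaten injectivity of $P$.
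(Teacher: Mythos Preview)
Your overall strategy is sound, but the route differs from the paper's.  The paper argues by \emph{reverse induction along the trace}: writing $H=H_k\circ\cdots\circ H_0$ with each $H_i$ an augmentation, it shows that every tail $H_k\circ\cdots\circ H_i\colon\skel_i\mapsto\skel'$ is already a DG shape for $\skel_i$.  The base case ($i=k$) uses that a single augmentation adds exactly the strand prefix needed to solve the pending cut or deliver the pending output, so no strictly smaller realized DG extension of $\skel_k$ sits inside $\skel'$; the inductive step is the one-line appeal to ``augmentations are minimal strictly monotone embeddings with respect to shapes.''  You instead pick a node-wise minimal $G\le H$ (existing by finiteness), then \emph{replay} the given trace inside the target $\skel_s$ of $G$ to manufacture a node-wise injective $P\colon\skel'\mapsto\skel_s$, and finish by a cardinality argument.

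Both arguments rest on the same core fact---your Claim is exactly what the paper's unproved inductive principle amounts to: any realized DG skeleton receiving $\skel_i$ homomorphically already receives $\skel_{i+1}$.  You are more explicit than the paper about why this holds, correctly locating the work in well-formedness via Proposition~\ref{proposition:strandconsistency}.  The paper's induction is shorter and avoids the detour through a hypothetical minimal $G$; your replay makes the forcing step visible and exhibits its kinship with the proof of Theorem~\ref{thm:completeness}.  One small slip to repair: from mutual node-wise injections $L,P$ between finite skeletons you get that each is a bijection, but not that they are \emph{mutually inverse}.  To conclude that $L$ is an isomorphism, observe that $P\circ L$ is a bijective $\preceq$-preserving self-map of the finite poset $\skel_s$, hence an automorphism, so $L^{-1}=(P\circ L)^{-1}\circ P$ is itself a homomorphism.
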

\begin{proof}
  By the previous lemmas, we know that $\skel'$ is realized and DG. We
  must prove that there exists a homomorphism $H:\skel\mapsto\skel'$
  which is a shape.

  As $\skel\ltsstar S\skel'$ then, by
  Proposition~\ref{proposition:reductionaugmentation}, we can choose
  $H=H_k\circ\ldots\circ H_0$ where $H_i:\skel_i\mapsto\skel_{i+1}$
  for for $\skel_0=\skel$ and $\skel_{k+1}=\skel'$ and some
  $\skel_i$. We shall prove that $H_k\circ\ldots\circ
  H_i:\skel_i\mapsto \skel'$ is a shape for $\skel_i$ for all $i$. We
  do it by induction on $j=k-i$.

  \begin{itemize}

  \item {\bf Base Case.} $j=1$. We have to prove that
    $H_k:\skel_k\mapsto\skel'$ is a shape for $\skel_k$. By
    Proposition~\ref{proposition:reducesunrealized}, we know that
    $\skel_k$ is not realized and/or not DG. Hence, $H_k$ must be the
    minimum homomorphism mapping $\skel_k$ to a DG realized
    skeleton. In fact, both \Did{A1} and \Did{A2}, add the minimum
    node explaining a box or receiving a pending output.

  \item {\bf Inductive Case.}  Let us assume that $j=i+1$. By
    induction hypothesis we know that $H_{k}\circ\ldots\circ
    H_{i+1}:\skel_{i+1}\mapsto \skel'$ is a shape for
    $\skel_{i+1}$. But then, as augmentations are minimal strictly
    monotone embedding with respect to shapes, we have that also
    $H_{k}\circ\ldots\circ H_{i}:\skel_{i}\mapsto \skel'$ is a shape
    for $\skel_{i}$.

  \end{itemize}

\end{proof}
\begin{theorem}[Termination]
  \label{thm:termination}
  Let $\skel$ be a single-stranded skeleton over $\absem C$. Then, we
  can reduce $\skel$ only a finite number of times.
\end{theorem}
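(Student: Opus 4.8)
The plan is to exhibit a measure on skeletons that strictly decreases with each reduction step and is bounded below, so that no infinite reduction sequence can exist. The natural candidate measure is the number of nodes in $\skel$, since by Proposition~\ref{proposition:reductionaugmentation} every step $\skel \lts{} S \skel'$ is an augmentation, and an augmentation either adds at least one new node or strengthens the order. The subtlety is that a step may add \emph{no} new nodes: rule \Did{A1} explicitly permits the case where $m$ is already in $\skel$, so that the step only enriches $\preceq$ by imposing $m \prec n$. Thus I would use a lexicographic measure: first the number of nodes of $\skel$, and second the number of pairs $(m,n)$ with $m \prec_{\skel} n$ that are \emph{missing} from the order (i.e.\ the number of incomparable or wrongly-ordered pairs among nodes already present). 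Each reduction either increases the node count (which is bounded above, as argued below) or, keeping the node count fixed, strictly adds an ordering pair, and the number of such pairs is finite.

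The key step is to bound the node count from above. Here I would invoke that $\skel$ is a skeleton \emph{over} $\absem C$ for a well-formed choreography $C$. Well-formedness forbids recursion and iteration, so the strand space $\absem C$ consists of strands of \emph{bounded length}, and there are only \emph{finitely many roles} and finitely many message templates arising from the finitely many branches of $C$. Since the paper works with finite strand spaces closed under substitutions of basic values for basic values, and basic values range over a finite set, up to isomorphism there are only finitely many strands available to be added. Crucially, the reduction rules add strands drawn from this fixed finite strand space $S$, and rule \Did{A1}'s side condition $c \dagger^B \textsf{msg}(m)$ together with the minimality requirement on $m$ means each newly added strand genuinely \emph{solves} a cut that was previously unsolved; so no strand (up to isomorphism and starting point) is added twice to serve the same explanatory role. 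Therefore the node count cannot grow without bound.

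I would then assemble the argument: starting from a single-stranded $\skel$, consider any reduction sequence $\skel = \skel_1 \lts{} S \skel_2 \lts{} S \cdots$. By the above, each step strictly decreases the lexicographic measure $(\,N - |\nodes{\skel_i}|\,,\ P - |{\prec_{\skel_i}}|\,)$, where $N$ bounds the attainable node count and $P$ bounds the attainable ordering pairs on those nodes; both components are non-negative integers, so the sequence is finite.

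The main obstacle I expect is the node-count bound: I must argue rigorously that the same regular strand is not re-added infinitely often in essentially the same way. The honest content lies in connecting the syntactic finiteness of a well-formed, recursion-free choreography to a genuine finite ceiling on the skeletons reachable by \Did{A1}/\Did{A2}. In particular I would want to confirm that \Did{A1}'s minimality and its "outside $B$" side condition prevent an unbounded tower of nestings from being generated, and that \Did{A2} only ever attaches a receiver to an already-present transmission node (adding strands from $S$ of bounded depth), so that the depth of boxes appearing in $\skel_i$ stays bounded by the box-depth occurring in $C$. Once that ceiling is established, the termination conclusion is immediate from the well-foundedness of the lexicographic order on $\mathbb{N}^2$.
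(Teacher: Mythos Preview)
Your proposal is correct and rests on the same core observation as the paper: the strand space generated from $\absem C$ is finite (finitely many strand templates, finitely many parameter substitutions over a finite set of basic values), so the pool of nodes that can ever be added is finite, and since each step does something new, the process terminates.

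The paper's own proof is a two-line sketch: it simply says that $\absem C$ is finite, the rules are augmentations, and the same node cannot be added twice, so one eventually exhausts all nodes. Your version is a more careful elaboration of the same idea. In particular, you correctly flag the edge case the paper glosses over: a step may add no new node and merely strengthen $\preceq$ (this can happen with \Did{A2}, and the paper's commentary on \Did{A1} suggests it can happen there as well). Your lexicographic measure $(\text{nodes},\text{order pairs})$ cleanly disposes of that case, whereas the paper's parenthetical ``increase the number of nodes'' silently ignores it. So your argument is not a different route but a tightened version of the paper's, and the extra bookkeeping you propose is exactly what is needed to make the sketch rigorous.

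One small simplification: you need not worry about ``the same strand being re-added infinitely often'' or about bounding box depth. Because the node set of the ambient strand space is literally finite, the crude bound $N = |\text{nodes}(S)|$ already works; no appeal to the specific side conditions of \Did{A1} is required for the node-count ceiling.
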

\begin{proof}
  $\absem C$ is finite and the reduction rules are augmentation
  (increase the number of nodes). As the same node cannot be added
  twice, we must eventually exhaust all nodes.
\end{proof}

\begin{example}[Shapes of the Buyer-Seller Protocol]\rm
  We show how to compute some shapes of the Buyer-Seller protocol
  starting from its semantics given in the previous section. We start
  from the buyer's strand $a)$ assuming that seller is
  compromised. Applying \Did{A1} to its fourth node, we get:
{\small  
  \begin{center}
    \begin{diagram}
      \bullet 
      & 
      \rTo^{\ \msgbox{(\op{Req},\op{prod})}{\textsf{B}\textsf{S}}\ } 
      &
      \quad\ \quad
      &
      &
      \\\\
      \dImplies
      & 
      \
      &
      \
      &
      \
      &
      \\\\
      \bullet 
      & 
      \lTo^{\ \msgbox{(\textsf{Reply},\texttt{quote})}{\textsf{S}\textsf{B}}\ } 
      &
      \quad\ \quad
      &
      &
      \\\\
      \dImplies
      & 
      \
      &
      \
      &
      \
      &
      \\\\
      \bullet 
      & 
      \rTo^{\ \msgbox{(\textsf{Accept},\msgbox{(\texttt{quote},\texttt{card})}{\textsf{B}\textsf{Bk}})}{\textsf{B}\textsf{S}}\ } 
      &
      \quad\preceq \quad
      &
      \rTo^{\ \msgbox{(\op{Pay},\textsf{quote},\msgbox{(\textsf{quote},\textsf{card})}{\textsf{B}\textsf{Bk}})}{\textsf{S}\textsf{Bk}}\ } 
      &
      \bullet
      \\\\
      \dImplies
      & 
      \
      &
      \
      &
      \
      &
      \dImplies
      \\\\
      \bullet 
      & 
      \lTo^{\ \msgbox{(\textsf{Succ},\msgbox{\texttt{receipt}}{\textsf{Bk}\textsf{B}})}{\textsf{S}\textsf{B}}\ } 
      &
      \quad\preceq\quad
      &
      \lTo^{\ \msgbox{(\textsf{Ok},\msgbox{\texttt{receipt}}{\textsf{Bk}\textsf{B}})}{\textsf{Bk}\textsf{S}}\ } 
      &
      \bullet
    \end{diagram}
  \end{center}
} Note that, we have actually applied \Did{A1} twice: the second time
it was applied to the first node of the new strand and its result was
only adding the relation $\preceq$.  The image of the shape for strand
$b)$, the case when the bank does not accept the transaction, is
similar. Let us now consider $d)$ and let us assume that buyer is
compromised. In this case, for
$M=\msgbox{(\textsf{Pay},\texttt{quote}\msgbox{(\texttt{quote},\texttt{card})}{\textsf{B}\textsf{Bk}})}{\textsf{S}\textsf{Bk}}$,
by applying \Did{A1} (twice) we get: {\small
  \begin{center}
    \begin{diagram}
      \
      &
      \rTo^{\ \msgbox{(\op{Req},\op{prod})}{\textsf{B}\textsf{S}}\ } 
      &
      \bullet 
      & 
      \
      &
      \quad\ \quad
      &
      &
      \\\\
      \
      &
      \
      &
      \dImplies
      & 
      \
      &
      \
      &
      \
      &
      \\\\
      \
      &
      \lTo^{\ \msgbox{(\textsf{Reply},\texttt{quote})}{\textsf{S}\textsf{B}}\ } 
      &
      \bullet 
      & 
      \
      &
      \quad\ \quad
      &
      &
      \\\\
      \
      &
      \
      &
      \dImplies
      & 
      \
      &
      \
      &
      \
      &
      \\\\
      \
      &
      \lTo^{\ \msgbox{(\textsf{Accept},\msgbox{(\texttt{quote},\texttt{card})}{\textsf{B}\textsf{Bk}})}{\textsf{B}\textsf{S}}\ }
      &
      \bullet 
      \\\\
      \
      &
      \
      &
      \dImplies
      \\\\
      \
      &
      \
      &
      \bullet 
      & 
      \rTo^{\ M\ }
      &
      \quad\preceq \quad
      &
      \rTo^{\ M\ }
      &
      \bullet
      \\\\
      \
      &
      \
      &
      \dImplies
      & 
      \
      &
      \
      &
      \
      &
      \dImplies
      \\\\
      \
      &
      \
      &
      \bullet 
      & 
      \lTo^{\ \msgbox{(\textsf{Ok},\msgbox{\texttt{receipt}}{\textsf{Bk}\textsf{B}})}{\textsf{Bk}\textsf{S}}\ }
      &
      \quad\preceq\quad
      &
      \lTo^{\ \msgbox{(\textsf{Ok},\msgbox{\texttt{receipt}}{\textsf{Bk}\textsf{B}})}{\textsf{Bk}\textsf{S}}\ }
      &
      \bullet
      \\\\
      \
      &
      \
      &
      \dImplies
      \\\\
      \
      &
      \rTo^{\ \msgbox{(\textsf{Succ},\msgbox{\texttt{receipt}}{\textsf{Bk}\textsf{B}})}{\textsf{S}\textsf{B}}\ }
      &
      \bullet 
    \end{diagram}
  \end{center}
}
\end{example}

\begin{example}\rm
  Let us consider a slightly different version of the Buyer-Seller
  protocol, where the buyer does not include the quote together with
  her credit card. In particular we would have the new following
  strands (the missing ones are unchanged): 
{\small
  \begin{align*}
    a')\quad &
    +\msgbox{(\op{Req},\op{prod})}{\textsf{B}\textsf{S}}\Rightarrow
    -\msgbox{(\textsf{Reply},\texttt{quote})}{\textsf{S}\textsf{B}}\Rightarrow
    +\msgbox{(\textsf{Accept},\msgbox{\texttt{card}}{\textsf{B}\textsf{Bk}})}{\textsf{B}\textsf{S}}\Rightarrow
    -\msgbox{(\textsf{Succ},\msgbox{\texttt{receipt}}{\textsf{Bk}\textsf{B}})}{\textsf{S}\textsf{B}}
    \\
    b')\quad &
    +\msgbox{(\op{Req},\op{prod})}{\textsf{B}\textsf{S}}\Rightarrow
    -\msgbox{(\textsf{Reply},\texttt{quote})}{\textsf{S}\textsf{B}}\Rightarrow
    +\msgbox{(\textsf{Accept},\msgbox{\texttt{card}}{\textsf{B}\textsf{Bk}})}{\textsf{B}\textsf{S}}\Rightarrow
    -\msgbox{(\textsf{Fail},\texttt{reason})}{\textsf{S}\textsf{B}}
    \\
    d')\quad & 
    -\msgbox{(\op{Req},\op{prod})}{\textsf{B}\textsf{S}}\Rightarrow
    +\msgbox{(\textsf{Reply},\texttt{quote})}{\textsf{S}\textsf{B}}\Rightarrow
    -\msgbox{(\textsf{Accept},\msgbox{\texttt{card}}{\textsf{B}\textsf{Bk}})}{\textsf{B}\textsf{S}}\Rightarrow    
    \\ &
    \Rightarrow+\msgbox{(\textsf{Pay},\texttt{quote},\msgbox{\texttt{card}}{\textsf{B}\textsf{Bk}})}{\textsf{S}\textsf{Bk}}\Rightarrow
    -\msgbox{(\textsf{Ok},\msgbox{\texttt{receipt}}{\textsf{Bk}\textsf{B}})}{\textsf{Bk}\textsf{S}}\Rightarrow
    +\msgbox{(\textsf{Succ},\msgbox{\texttt{receipt}}{\textsf{Bk}\textsf{B}})}{\textsf{S}\textsf{B}}
    \\
    e')\quad & 
    -\msgbox{(\op{Req},\op{prod})}{\textsf{B}\textsf{S}}\Rightarrow
    +\msgbox{(\textsf{Reply},\texttt{quote})}{\textsf{S}\textsf{B}}\Rightarrow
    -\msgbox{(\textsf{Accept},\msgbox{\texttt{card}}{\textsf{B}\textsf{Bk}})}{\textsf{B}\textsf{S}}\Rightarrow
    \\ &
    \Rightarrow+\msgbox{(\textsf{Pay},\texttt{quote},\msgbox{\texttt{card}}{\textsf{B}\textsf{Bk}})}{\textsf{S}\textsf{Bk}}\Rightarrow
    -\msgbox{(\textsf{NotOk},\texttt{reason})}{\textsf{Bk}\textsf{S}}\Rightarrow
    +\msgbox{(\textsf{Fail},\texttt{reason})}{\textsf{S}\textsf{B}}
    \\
    g')\quad &
    -\msgbox{(\textsf{Pay},\texttt{quote},\msgbox{\texttt{card}}{\textsf{B}\textsf{Bk}})}{\textsf{S}\textsf{Bk}}\Rightarrow
    +\msgbox{(\textsf{Ok},\msgbox{\texttt{receipt}}{\textsf{Bk}\textsf{B}})}{\textsf{Bk}\textsf{S}}
    \\
    h')\quad &
    -\msgbox{(\textsf{Pay},\texttt{quote},\msgbox{\texttt{card}}{\textsf{B}\textsf{Bk}})}{\textsf{S}\textsf{Bk}}\Rightarrow
    +\msgbox{(\textsf{NotOk},\texttt{reason})}{\textsf{Bk}\textsf{S}}
  \end{align*}} If the seller is corrupted, starting from $g')$ and
applying \Did{A1} to its first node, we get the realized skeleton:
{\small
  \begin{center}
    \begin{diagram}
      \bullet 
      & 
      \rTo^{\ \msgbox{(\op{Req},\op{prod})}{\textsf{B}\textsf{S}}\ } 
      &
      \quad\ \quad
      &
      &
      \\\\
      \dImplies
      & 
      \
      &
      \
      &
      \
      &
      \\\\
      \bullet 
      & 
      \lTo^{\ \msgbox{(\textsf{Reply},\texttt{quote})}{\textsf{S}\textsf{B}}\ } 
      &
      \quad\ \quad
      &
      &
      \\\\
      \dImplies
      & 
      \
      &
      \
      &
      \
      &
      \\\\
      \bullet 
      & 
      \rTo^{\ \msgbox{(\textsf{Accept},\msgbox{\texttt{card}}{\textsf{B}\textsf{Bk}})}{\textsf{B}\textsf{S}}\ } 
      &
      \quad\preceq \quad
      &
      \rTo^{\ \msgbox{(\op{Pay},\textsf{quote'},\msgbox{\textsf{card}}{\textsf{B}\textsf{Bk}})}{\textsf{S}\textsf{Bk}}\ } 
      &
      \bullet
      \\\\
      \
      & 
      \
      &
      \
      &
      \
      &
      \dImplies
      \\\\
      \
      & 
      \
      &
      \quad\ \quad
      &
      \lTo^{\ \msgbox{(\textsf{Ok},\msgbox{\texttt{receipt}}{\textsf{Bk}\textsf{B}})}{\textsf{Bk}\textsf{S}}\ } 
      &
      \bullet
    \end{diagram}
  \end{center}}
  The realized skeleton above shows a flaw, or at least an undesirable
  aspect of this version of the protocol.  The value \textsf{quote}
  that the client accepted can be different from \textsf{quote'}
  received by the bank, allowing for the seller to cheat on the quote
  agreed with the buyer.
\end{example}
%


\section{Conclusions}
\label{sec:conclusions}
In this paper, we have used the strand space framework to study the
possible behaviors of choreographies executing in the presence of
compromised principals.  In this framework, the strands of the
uncompromised regular participants can freely interact with each other
and with behaviors possible for corrupted parties.  We clarified these
behaviors by presenting a pair of transition rules which generate all
of the minimal, essentially different executions.  

It is a strength of this approach that it allows us to formulate and
characterize a number of interesting properties.  For instance, what
about the relationship between shapes (namely minimal executions) and
other, possibly non-minimal executions?  One might expect that
non-minimal executions would be disjoint unions of copies of shapes.
However, this intuition requires a property of choreographies, which
may be characterized syntactically.  In effect, it requires that when
the choreography has a choice, then the same principals are active
across both branches of the choice (except possibly the last principal
on one branch).  This corresponds to an assumption of~\cite{CDFBL09}.
We also conjecture that, under these assumptions, shapes are {\em
run-once} i.e. they are such that there is at most one strand
belonging to each role.  In future work we intend to explore
properties of this kind, in particular when the choreography language
is extended with parallel composition and recursive behaviour.

We also intend to study the relation between protocol descriptions at
the choreography-and-box level and at the concrete cryptographic
level.  We intend to investigate properties of protocol
transformations in general~\cite{Guttman09a} in order to develop
fine-grained principles governing how to generate cryptographic
implementations for choreographies requiring security infrastructures.

\bibliographystyle{eptcs} 
\bibliography{session}

\end{document}